\definecolor{webred}{rgb}{.8,0,0}
\definecolor{webbrown}{rgb}{.6,0,0}
\definecolor{webgreen}{rgb}{0,0.5,0}
\definecolor{webdkgreen}{rgb}{0,0.3,0}
\definecolor{webblue}{rgb}{0,0,0.5}
\numberwithin{equation}{section}
\providecommand{\href}[2]{#2}
\newcommand{\be}{\begin{eqnarray}}
\newcommand{\beq}{\begin{eqnarray}}
\newcommand{\ee}{\end{eqnarray}}
\newcommand{\BR}{{\mathbb R}}
\newcommand{\ii}{\textnormal{i}}
\newcommand{\I}{{\mathrm i}}
\newcommand{\la}{{\lambda}}
\def\e#1\e{\begin{equation}#1\end{equation}}
\def\ea#1\ea{\begin{align}#1\end{align}}
\theoremstyle{plain}
\newtheorem{thm}{Theorem}[section]
\newtheorem*{thm*}{Theorem}
\newtheorem{prop}[thm]{Proposition}
\theoremstyle{definition}
\newtheorem{dfn}[thm]{Definition}
\newtheorem{rem}[thm]{Remark}
\begin{document}

\setlength{\parindent}{0cm}
\setlength{\baselineskip}{1.5em}
\title{\bf{Non-perturbative topological strings\\ from resurgence}}

\author{Murad Alim\footnote{\tt{murad.alim@tum.de}}\\
\small Department of Mathematics, Technical University of Munich, Boltzmannstr. 3, 85748 Garching\\
\small Maxwell Institute for Mathematical Sciences, Edinburgh EH14 4AS, UK\\
\small Department of Mathematics, Heriot-Watt University, Edinburgh EH14 4AP, UK}

\date{}
\maketitle

\abstract{The partition function of topological string theory on any family of Calabi-Yau threefolds is defined perturbatively as an asymptotic series in the topological string coupling and encodes, in a holomorphic limit, higher genus Gromov-Witten as well as Gopakumar-Vafa invariants. We prove that the partition function of topological strings of any CY in this limit can be written as a product, where each factor is given by the partition function of the resolved conifold with shifted arguments, raised to the power of certain sheaf invariants. We use this result to put forward an expression for the non-perturbative topological string partition function in this limit, as a product over analytic functions in the topological string coupling which correspond to the Borel sums for the resolved conifold found previously. We furthermore find an expression for the Borel transform of the full asymptotic series in this limit expressed in terms of the sheaf invariants. We use this to define the Borel sums and compute the corresponding Stokes jumps which constitute non-perturbative corrections to the partition function. The jumps depend only on genus zero GV invariants and their sum can be expressed entirely in terms of a single function which is introduced as a deformation of the prepotential.}

\clearpage


\tableofcontents

\section{Introduction and summary}
\subsection{Introduction}
Topological string theory is defined perturbatively and its partition function is given by an asymptotic series in the topological string coupling. The perturbative definition as well as the asymptotic nature of the expansion of its partition function are common traits of quantum field and string theories, see e.~g.~\cite{Marino:2012zq} and references therein. The clear mathematical formulation of topological strings as well as its connections to enumerative geometry make however both perturbative computations as well as the study of its non-perturbative structure highly accessible.

Perturbative computations within topological strings are accessed by the holomorphic anomaly equations of \cite{Bershadsky:1993cx} and their recursive solution in terms of Feynman diagrams. The feasibility of higher genus computation was significantly enhanced due to the discovery of a polynomial structure of the free energies \cite{Yamaguchi:2004bt,Alim:2007qj} \footnote{See also \cite{Alimlectures} and references therein } which can be used together with expected boundary conditions to access higher genus computations \cite{Huang:2006hq}, see also \cite{Alexandrov:2023zjb} for a recent study. The anomaly equations were also used to access the non-perturbative structure by making a suitable ansatz using transseries in \cite{Couso-Santamaria:2013kmu,Couso-Santamaria:2014iia}, see also the recent \cite{Gu:2023mgf}. The non-perturbative structure of topological strings has been addressed in many works, using connections and dualities of topological string theory with other areas of physics and mathematics such as the study of black holes partition functions \cite{Ooguri:2004zv}, large order behavior \cite{Marino:2006hs,Marino:2007te}, matrix models \cite{Eynard:2008he,Pasquetti:2009jg}, superconformal theories \cite{Lockhart:2012vp}, quantum mechanical spectral theory \cite{Grassi:2014zfa} and Chern-Simons theories \cite{Krefl:2015vna} to name a few, see \cite{Aniceto:2011nu,Marino:2012zq} and references therein.

The focus of this work is on the non-perturbative structure of topological string theory as addressed via the theory of resurgence, which was developed by \'Ecalle \cite{ecalle1981resurgent},  see \cite{SauzinLectures} for an introduction to the subject as well as \cite{Marino:2012zq,Marino:2024tbx} for applications in physics and topological string theory. The use of resurgence for the study of non-perturbative topological string theory was initiated in \cite{Marino:2008ya}. Borel summation and resurgence of topological strings for geometries which can be related to matrix models including the resolved conifold was studied in \cite{Pasquetti:2009jg}, the Borel resummation for the resolved conifold was further studied in \cite{Hatsuda:2015owa}. The study of Borel resummation for the resolved conifold was revisited in \cite{Alim:2021mhp} building on methods and results of \cite{Coman:2018uwk,Coman:2020qgf,Alim:2020tpw,Alim:2021ukq,Garoufalidis:2020pax}  and obtaining exact analytical results for the Borel transform and Stokes jumps and identifying the Stokes jump coefficients with BPS or Dondaldson-Thomas (DT) invariants using the link to the Riemann-Hilbert problem studied in \cite{Bridgeland1,BridgelandCon}. In the series of works \cite{Gu:2022sqc,Iwaki:2023cek,Gu:2023mgf}, the resurgence of topological string theory on arbitrary Calabi-Yau threefolds has been studied via the holomorphic anomaly equations satisfied by the free energies, producing formal solutions for their multi-instanton trans-series extensions. The Stokes constants in these resurgent structures are conjecturally identified with enumerative invariants counting BPS states of the topological string. The resurgence of open and closed as well as refined topological strings on this geometry was further studied in \cite{Grassi:2022zuk,Alim:2022oll}.  Further recent studies addressing resurgence of open and closed topological strings include \cite{Rella:2022bwn,Marino:2023nem,Alexandrov:2023wdj,Fantini:2024snx}

In this work we start with the general form of the topological string partition function encoding the Gromov-Witten and Gopakumar-Vafa invariants, which corresponds to a limit of the full topological string partition function which satisfies the holomorphic anomaly equation. By using sheaf invariants put forward by Maulik and Toda \cite{MaulikToda} to define mathematically the GV invariants, we bring the partition function into a form which can be recognised as an infinite product of a fundamental building block for every sheaf invariant which is evaluated at a shifted argument and raised to the power of the corresponding sheaf invariant. The building block is the topological string partition function of the resolved conifold. We then use the analytic functions in the topological string coupling which were obtained as Borel sums in the resurgence analysis of \cite{Alim:2021mhp} of the resolved conifold to put forward a non-perturbative expression for the topological string partition function in the holomorphic limit in the general case. We then proceed by studying the resurgence of the asymptotic series of topological strings in this limit expressed in terms of the sheaf invariants. We prove an expression for the Borel transform in this case and use this to define the Borel sums and compute the Stokes jumps. The sum of the latter constitute non-perturbative corrections to the topological string free energy and can be expressed in terms of a function which we introduce as a deformation of the prepotential and which plays the analogous role as the topological string partition function in the Nekrasov-Shatashvili limit \cite{Nekrasov:2009rc} in the case of the resolved conifold.

\subsection{Summary}
Before proceeding we give a brief summary of the results. The main results of this paper are an expression for the non-perturbative topological string partition function on any Calabi-Yau manifold $X$ as well as an expression for the Borel transform, Borel sums and Stokes jumps. These expressions hold in the limit of topological string theory which contains the enumerative invariants of the underlying CY manifold.

The analytic expression is given, up to the contributions from constant maps at genus $0,1$, by:
\begin{equation}
Z_{\textrm{top},np}(\lambda,t) =  Z_{c,np}(\lambda)  \prod_{n\in\mathbb{Z}} \prod_{\beta>0} \left( Z^{con}_{np} (\lambda, t^{\beta}+ n \, \check{\lambda})\right)^{\Omega_{\beta,n}}\,,
\end{equation}
where
$$Z^{con}_{np}(\lambda,t)= \prod_{m=0}^{\infty} (1-Q\, q^{m+1})^{m+1}  \cdot \exp\left(-\frac{1}{2\pi i} \operatorname{Li}_2(Q' q'^m)\right) \cdot (1-Q' q'^m)^{-\frac{2\pi}{\lambda}\left( t+m\right)}\,, $$
$$
Z_{c,np}(\lambda)= \left( \prod_{m=1}^{\infty} (1- q^{m})^{m}  \cdot \exp\left(-\frac{1}{2\pi i} \operatorname{Li}_2(q'^m)\right) \cdot (1-q'^m)^{-\frac{ m}{\check{\lambda}}}\right)^{-\chi(X)/2}\,,
$$
where $Q=\exp(2\pi i t),\quad q=e^{\ii\lambda}\,, \quad Q'=\exp(2\pi i t/\check{\lambda})\,, \quad q'=\exp(2\pi i/\check{\lambda})\,,\quad \check{\lambda}=\frac{\lambda}{2\pi}$ and where the subscript $c$ in $Z_c$ stands for the constant map contribution. The class $\beta \in H_{2}(X,\mathbb{Z})$ refers to a curve class, and $\beta>0$ denotes the non-constant curves. $\Omega_{\beta,n}$ are up to a sign the sheaf invariants defined in terms of the higher genus Gopakumar-Vafa invariants as follows \cite{MaulikToda} \footnote{We follow the notation of \cite{Stoppa}}
\begin{equation}
\sum_{n\in \mathbb{Z}\,} \Omega_{\beta,n} y^n = \sum_{g\ge0} [\textrm{GV}]_{\beta,g}\, (-1)^g (y^{1/2}-y^{-1/2})^{2g}\,.
\end{equation}

The expression can be given in terms of special functions as follows:

\begin{equation}
\begin{split}
Z_{\textrm{top},np}(\lambda,t) =\left( \mathcal{S}_3(0 | \check{\lambda},1)\right)^{-\chi(X)/2}  \prod_{n\in\mathbb{Z}} \prod_{\beta>0} \left( \mathcal{S}_3(t^{\beta} + n \check{\lambda} | \check{\lambda},1)\right)^{\Omega_{\beta,n}}\,,
\end{split}
\end{equation}
where 
$$ 
    \mathcal{S}_3(z\, | \, \omega_1,\omega_2) := \exp\left(\frac{\pi \I}{6} \cdot B_{3,3}(z+\omega_1\,|\,\omega_1,\omega_1,\omega_2)\right) \cdot \sin_3(z+\omega_1\, |\, \omega_1,\omega_2,\omega_3),$$
and the relevant definitions for the Bernoulli polynomials and the triple sine function are given in the appendix \ref{appendix}.

This proposal is put forward by using theorem  \ref{thm:conifolds} of this paper, stating that  the generating function of GW invariants, excluding the classical terms as well as the contributions from constant maps for any Calabi-Yau threefold:
\begin{equation}
\tilde{F}_{GW}(\lambda,t)= \sum_{g=0}^{\infty}\sum_{\beta>0}  \lambda^{2g-2} [GW]_{\beta,g} Q^{\beta}\,,
\end{equation} 
can be written as:
\begin{align}
\tilde{F}_{GW}(\lambda,t)&=  \sum_{n\in \mathbb{Z}}  \sum_{\beta>0}   \Omega_{\beta,n} \widetilde{F}^{con} (\lambda, t^{\beta,n}) \, \\
&=\sum_{n\in \mathbb{Z}}  \sum_{\beta>0}   \Omega_{\beta,n}  \left( \frac{1}{\lambda^2} \mathrm{Li}_{3}(Q^{\beta,n})+ \sum_{g=1}^{\infty} \lambda^{2g-2} \frac{(-1)^{g-1}B_{2g}}{2g (2g-2)!} \mathrm{Li}_{3-2g} (Q^{\beta,n}) \, \right)\,,
\end{align}
where $t^{\beta,n}:=t^{\beta}+ n \check{\lambda}\,,Q^{\beta,n}:=\exp(2\pi \ii t^{\beta,n})$  and then using the Borel summation analysis of \cite{Alim:2021mhp}.

Furthermore the non-perturbative partition function is written as:
\begin{align}
Z_{\textrm{top},np}(\lambda,t) =  Z_{c,np}(\lambda) \times Z_{top,wk}(\lambda,t) \times Z_{top,sg}(\lambda,t)
\end{align}
where
\begin{align}
 Z_{top,wk}(\lambda,t)&:= \prod_{n\in \mathbb{Z}}  \prod_{\beta>0}  \prod_{m=1}^{\infty}   \left(1-   Q^{\beta}   \exp(\ii (n+m) \,\lambda) \right)^{m\, \Omega_{\beta,n}}\,,\\
 Z_{top,sg}(\lambda,t) &:=   \prod_{\beta>0} \exp\left(-\frac{[\textrm{GV}]_{\beta,0}}{2\pi i} \operatorname{Li}_2(Q'^{\beta} q'^m)\right) \cdot (1-Q'^{\beta} q'^m)^{-\frac{ [\textrm{GV}]_{\beta,0}}{\check{\lambda}}\left( t^{\beta}+m\right)}\,,
\end{align}
where the subscripts $wk$ and $sg$ stand for weak ($\lambda$ small) and strong ($1/\lambda $ small) respectively and denote the factors of the non-perturbative partition function which are expressions in $q=\exp(\ii \lambda)$ and $q'=\exp(2\pi \ii /\check{\lambda})$ respectively. The surprising result is that the non-perturbative corrections only depend on genus $0$ GV invariants which is due do cancellations leading to a dependence on only the sum of the sheaf invariants for each curve class $\beta$, see \eqref{eq:sumOmega}:
\begin{equation}
\sum_{n\in \mathbb{Z}} \Omega_{\beta,n}  = [\textrm{GV}]_{\beta,0}\,,
\end{equation}
where the sum on the LHS has only finitely many non-zero terms due to the finiteness of GV invariants, see sec.\ref{sec:finiteness}.

 The non-perturbative corrections can be furthermore packaged in terms of an $\varepsilon-$deformation of the prepotential, which we define as follows:
\begin{equation}
\tilde{F}^{0}_{\textrm{def}}(\varepsilon,t):=\frac{1}{2} \sum_{k,\beta>0} \frac{[\textrm{GV}]_{\beta,0}}{k^2} \frac{Q^{k\beta}}{\sin(k\varepsilon/2)}\,.
\end{equation}
 The $q'$ dependent factor
 $$Z_{top,sg}(\lambda,t)= \exp\left( F_{top,sg}(\lambda,t)\right)\,,$$
 can be expressed as
 \begin{equation}
F_{top,sg}(\lambda,t)=  -\frac{1}{2\pi} \partial_{\lambda} \left( \lambda \, \tilde{F}^{0}_{\textrm{def}} \left( \frac{2\pi}{\check{\lambda}}, \frac{t-1/2}{\check{\lambda}} \right)\right)\,.
\end{equation}
This $\varepsilon$ deformation of the prepotential which we propose agrees with the refined topological string in the Nekrasov-Shatashvili limit in the case of the resolved conifold.

We then proceed to study the non-perturbative structure of topological string theory in this limit via resurgence, we therefore express the asymptotic series of higher genus GW invariants in terms of the sheaf invariants finding in particular new expressions of the higher genus GW generating functions:

\begin{equation}
\tilde{F}^g(t)= \sum_{\beta>0} \sum_{n\in\mathbb{Z}}\frac{(-1)^g \Omega_{\beta,n}}{2\cdot2g!} \left( (2g-1) (B_{2g}(n)+B_{2g}(-n))- 2g n (B_{2g-1}(n)-B_{2g-1}(-n))\right)  \operatorname{Li}_{3-2g}(Q^{\beta}) \,,
\end{equation}

for $g\ge 2$, where $B_m(x)$ denotes Bernoulli polynomials. For $\Phi(\check{\lambda},t):= \sum_{g=2}^{\infty} \lambda^{2g-2} \tilde{F}^g(t), \check{\lambda}=\lambda/2\pi$  we show in theorem \ref{thm:Borel} that the Borel transform can analytically continued to a meromorphic function given by:
\begin{equation} 
G(\xi,t) 
=  \frac{1}{\xi} \sum_{\beta>0} \sum_{k,n\in \mathbb{Z}} \frac{\Omega_{\beta, n}}{2}
\left(
  \frac{2(t^{\beta}+k)^2}{\xi^2}
  + \left(n^2 - \frac{1}{6}\right)
  - \frac{
      e^{\frac{\xi(n+1)}{t^{\beta}+k}}
      + e^{\frac{\xi(-n+1)}{t^{\beta}+k}}
    }{
       \left( e^{\frac{\xi}{t^{\beta}+k}} - 1 \right)^2
    } 
\right)\,,
\end{equation}
with poles at $\xi= 2\pi \ii m  (t^\beta+k)\,,$ for $m\in \mathbb{Z}\setminus \{0\}$ and $k\in \mathbb{Z}$. Theorem \ref{thm:Stokes} then provides expressions for the Stokes jumps and identifies an infinite sum of these with the non-perturbative corrections obtained from the use of the resolved conifold results.

The organization of this work is as follows. In sec.~\ref{sec:freeenergies}, the topological string free energies are recalled as well as their Gromov-Witten and Gopakumar-Vafa expansions. We proceed with a review of the resurgence analysis results of \cite{Alim:2021mhp} to introduce the building blocks of the more general topological string analysis in later sections, we provide new infinite product expressions for the non-perturbative partition function of topological strings on the resolved conifold as well as for topological strings on $\mathbb{C}^3$ which is also the building block for the more general constant maps contributions. The latter also provides an interesting non-perturbative completion of the MacMahon function.


\section{Topological strings and enumerative geometry}\label{sec:freeenergies}
To a mirror family of CY threefolds, topological string theory associates the topological string partition function which is defined as an asymptotic series in the topological string coupling $\lambda$, summing over the free energies $ \mathcal{F}^{g}(t)$ associated to Riemann surfaces of genus $g$:
\begin{equation}
Z_{top} (\lambda,t)= \exp \left(\sum_{g=0}^{\infty} \lambda^{2g-2} \mathcal{F}^{g}(t)\right)\,,
\end{equation}
where $t=(t^1,\dots,t^n)$ is a set of distinguished local coordinates on the underlying moduli space $\mathcal{M}$, which is of dim $n=h^{1,1}(X_t)=h^{2,1}(\check{X}_{t(z}))$. $X_t$ and $\check{X}_{t(z)}$ are a mirror pair of CY threefolds which correspond to the A-model and B-model sides of mirror symmetry, the notation $t(z)$ refers to the mirror map which is used to identify a natural set of coordinates on both sides.


\subsection{The Gromov-Witten potential}
In a certain limit together with an expansion around a distinguished large volume point in the moduli space, see e.~g.~\cite{Bershadsky:1993cx} the topological string free energies $\mathcal{F}^{g}(t)$, which are non-holomorphic sections of certain line bundles, become the generating functions of higher genus Gromov-Witten invariants $F^g(t)$ on the A-model side of mirror symmetry. The GW potential of $X$ is the following formal power series:
\begin{equation}
F_{GW}(\lambda,t) = \sum_{g\ge 0}  \lambda^{2g-2} F^g(t)= \sum_{g\ge 0}  \lambda^{2g-2} \sum_{\beta\in H_2(X,\mathbb{Z})}  [\textrm{GW}]_{\beta,g} \,Q^{\beta}\, ,
\end{equation}
where $Q^{\beta} := \exp (2\pi \I t^{\beta} )$ is a formal variable and $[\textrm{GW}]_{\beta,g}$ are the Gromov-Witten invariants associated to a curve of genus $g$ and class $\beta$. Let $C_a \in H_{2}(X,\mathbb{Z}), a=1,\dots,n=h^{1,1}(X)$ be a set of curve classes spanning $H_2(X,\mathbb{Z})$, then in general all curve classes $\beta$ can be written as $\beta=\sum_{a=1}^{n} d_a C_a$, where the $d$'s are the degrees of the curve. The formal variables $t^{\beta}$ become $t^{\beta}=\sum_{a=1}^{n} d_a t^a$ and the $t^a$ correspond to local coordinates on the moduli space of complexified K\"ahler forms of the underlying CY $X$. Bearing this in mind, we will in the following treat $t^{\beta}$ as independent formal variables for the curve classes $\beta$ for ease of exposition of the general results.

The GW potential can be furthermore written as:
\begin{equation}
F_{GW}=F_c + \tilde{F}_{GW}\,,
\end{equation}
where $F_c$ denotes the contribution from constant maps and $ \tilde{F}$ the contribution from non-constant maps. The constant map contribution at genus 0 and 1 are $t$ dependent and the higher genus constant map contributions take the universal form \cite{Faber}:
\begin{equation}
F_c^g:= \frac{\chi(X)(-1)^{g-1}\, B_{2g}\, B_{2g-2}}{4g (2g-2)\, (2g-2)!}\,, \quad g\ge2\,,
\end{equation}
where $\chi(X)$ is the Euler characteristic of $X$ and the Bernoulli numbers $B_n$ are generated by:
\begin{equation}
\frac{w}{e^w-1} = \sum_{n=0}^{\infty} B_n \frac{w^n}{n!}\,.
\end{equation}
\subsection{The Gopakumar-Vafa resummation}

The Gopakumar-Vafa (GV) resummation of the GW potential \cite{Gopakumar:1998ii,Gopakumar:1998jq} reformulates the non-constant part of the GW potential in terms of the Gopakumar-Vafa invariants  $[\textrm{GV}]_{\beta,g} \in \mathbb{Z}$ which are given by a count of electrically charged $M_2$ branes in an M-theory setup. The GW potential can thus be written as:
\begin{equation}\label{GVresum}
F_{GV}(\lambda,t)= \sum_{\beta>0}\sum_{g\ge 0} [\textrm{GV}]_{\beta,g}\, \sum_{k\ge 1} \frac{1}{k} \left( 2 \sin \left( \frac{k\lambda}{2}\right)\right)^{2g-2} Q^{k\beta}\, .
\end{equation}
In particular, the asymptotic expansion of $F_{GV}$ at  $\lambda=0$ is given by.$\tilde{F}_{GW}$. One can in particular re-write the genus $0$ part of $\tilde{F}_{GW}$ in terms of the GV invariants as 
$$ \tilde{F}^0(t)=\sum_{\beta>0} [\textrm{GV}]_{\beta,0}\, \operatorname{Li}_3(Q^{\beta})\,, \quad Q^{\beta}= \exp(2\pi i t^{\beta})\,.$$ 


\subsection{Finiteness of GV invariants}\label{sec:finiteness}
We remark that in the GV resummation of the GW potential the integrality of the $[\textrm{GV}]_{\beta,g}$ was conjectured from their M-theoretic BPS counting origin. The integrality was proven in \cite{Ionel}. A further conjecture concerns the finiteness of the GV invariants, meaning that for every curve class $\beta$, there exists a $g_{\beta}\in \mathbb{N}$ such that 

$$[\textrm{GV}]_{\beta,g'}=0 \,, \quad \forall g'\ge g_{\beta}\,,$$
see \cite{Doan} for a proof of this using methods from symplectic geometry and only applies to the class of CY geometries studied in that work. Algebraic geometrically it is expected that the $[\textrm{GV}]_{\beta,g}$ vanish for each curve class $\beta$ at high enough genus, this is discussed as the Castelnuovo bound for instance in \cite{Katz:1999xq} and is used as a boundary condition together with the polynomiality of \cite{Yamaguchi:2004bt}, for higher genus GV computations in the quintic \cite{Huang:2006hq}.


\subsection{GW potential of the resolved conifold}
An example of a non-compact CY manifold $X$  for which the asymptotic series is known to all orders is the 
resolved conifold which is given by the total space of the rank two bundle over the projective line:
\begin{equation}
X := \mathcal{O}(-1) \oplus \mathcal{O}(-1) \rightarrow \mathbb{P}^1\,,
\end{equation}
and corresponds to the resolution of the conifold singularity.\\ 

The GW potential for this geometry was determined in physics \cite{Gopakumar:1998ii,GV}, and in mathematics \cite{Faber} with the following outcome for the non-constant maps:
\begin{equation}\label{resconfree}
\widetilde{F}^{con}(\lambda,t)= \sum_{g=0}^{\infty} \lambda^{2g-2} \widetilde{F}^g(t)= \frac{1}{\lambda^2} \mathrm{Li}_{3}(Q)+ \sum_{g=1}^{\infty} \lambda^{2g-2} \frac{(-1)^{g-1}B_{2g}}{2g (2g-2)!} \mathrm{Li}_{3-2g} (Q) \, ,
\end{equation}
using the notation $Q=e^{2\pi\ii t}$.


\section{Building blocks of non-perturbative topological strings} \label{sec:buildingblocks}

In the following we review the resurgence study of the building blocks of topological strings which will be used in the next sections to spell out the generalization for topological strings on arbitrary CY manifolds. The building blocks in this section will be topological strings on  the resolved conifold as well as on $\mathbb{C}^3$ which also gives the piece of the topological string partition function stemming from constant maps.  The contents of this section build on \cite{Alim:2021mhp} with a few new important additions.

\subsection{The resolved conifold}\label{subsec:rescon}

The starting point is the formal series giving the non-constant and non-classical piece of the Gromov-Witten potential or the topological string free energy of the resolved conifold:
\begin{align*}\label{formal}
\widetilde{F}^{con}(\lambda,t)&=  \frac{1}{\lambda^2} \mathrm{Li}_{3}(Q)+\frac{B_2}{2}\mathrm{Li}_1(Q) + \sum_{g=2}^{\infty} \lambda^{2g-2} \frac{(-1)^{g-1}B_{2g}}{2g (2g-2)!}\, \mathrm{Li}_{3-2g} (Q) \,  \\
&=\frac{1}{\lambda^2} \mathrm{Li}_{3}(Q)  +\frac{B_2}{2}\mathrm{Li}_1(Q)+ \Phi^{con}(\check{\lambda},t)\,, \quad \check{\lambda}=\frac{\lambda}{2\pi}\,, \quad Q=e^{2\pi \I t}\,. 
\end{align*} 

In \cite{Alim:2020tpw} following methods of \cite{Iwaki2}, it was proven that this series satisfies the following difference equation:
\begin{equation}\label{eq:diffeq}
\widetilde{F}^{con}\left(\lambda,t+\check{\lambda}\right) + \widetilde{F}^{con}\left(\lambda,t-\check{\lambda}\right) - 2 \widetilde{F}^{con}\left(\lambda,t\right)=-\operatorname{Li}_1(Q) \,, \quad \check{\lambda}=\frac{\lambda}{2\pi}\,.\\
\end{equation}

A solution in terms of the triple sine function of this difference equation was found \cite{Alim:2021lld}:
\begin{equation}
\begin{split}
\widetilde{F}^{con}_{\text{np}}(\lambda,t) := \log \left( \mathcal{S}_3(t | \check{\lambda},1)\right)\,,
\end{split}
\end{equation}
where 
$$ 
    \mathcal{S}_3(z\, | \, \omega_1,\omega_2) := \exp\left(\frac{\pi \I}{6} \cdot B_{3,3}(z+\omega_1\,|\,\omega_1,\omega_1,\omega_2)\right) \cdot \sin_3(z+\omega_1\, |\, \omega_1,\omega_2,\omega_3),$$
and the relevant definitions for the Bernoulli polynomials and the triple sine function are given in the appendix \ref{appendix}. The non-perturbative content of this solution was analyzed in \cite{Alim:2021ukq} and in \cite{Alim:2021mhp}. In \cite{Alim:2022oll} it was furthermore shown that $\widetilde{F}^{con}_{\text{np}}(\lambda,t)$ satisfies a further difference equation given by:
\begin{align}\label{eq:diffeq2}
    \widetilde{F}_{\text{np}}^{con}(\lambda,t+1)-\widetilde{F}_{\text{np}}^{con}(\lambda,t)=\frac{1}{2\pi \I} \frac{\partial}{\partial \check{\lambda}} \left(  \check{\lambda} \, \operatorname{Li}_2(e^{2\pi \I t/\check{\lambda}})\right)\,.
\end{align}

$\widetilde{F}^{con}_{\text{np}}(\lambda,t)$  was identified in \cite{Alim:2021mhp} as the Borel summation of the asymptotic series along a distinguished ray on the real axis in the Borel plane, furthermore the inhomogeneous piece of the difference equation  \eqref{eq:diffeq2} corresponds to the Stokes jumps of the Borel summation of the asymptotic series. To clarify this we recall here the general story of Borel summation.

Given a formal power series $a(\check{\lambda}) \in \check{\lambda}\mathbb{C}[[\check{\lambda}]]$, its Borel transform $\mathcal{B}(a)(\xi)$, is given by:
\begin{equation}
    \mathcal{B}\colon\check{\lambda}\mathbb{C}[[\check{\lambda}]]\to \mathbb{C}[[\xi]], \;\;\;\; \mathcal{B}(\check{\lambda}^{n+1})=\frac{\xi^n}{n!}.
\end{equation}
Let $\check{\lambda}\in \mathbb{C}^{\times}$ and let $\rho$ be a ray from $0$ to $\infty$ in the complex $\xi$-plane. If $\mathcal{B}(a)(\xi)$ defines an analytic function along $\rho$, we define the Borel sum of $a(\check{\lambda})$ at $\check{\lambda}$, along $\rho$ by 
\begin{equation}\label{defborelsum}
    \int_{\rho}\mathrm d\xi \;e^{-\xi/\check{\lambda}}\mathcal{B}(a)(\xi)\,.
\end{equation}
If \eqref{defborelsum} is finite, we say $a(\check{\lambda})$ is Borel summable at $\check{\lambda}$, along $\rho$.

Writing
\begin{align*}\label{formal}
\widetilde{F}^{con}(\lambda,t) =\frac{1}{\lambda^2} \mathrm{Li}_{3}(Q)  +\frac{B_2}{2}\mathrm{Li}_1(Q)+ \Phi^{con}(\check{\lambda},t)\,, \quad \check{\lambda}=\frac{\lambda}{2\pi}\,, \quad Q=e^{2\pi \I t}\,,
\end{align*} 
the following was proven in \cite{Alim:2021mhp} for the Borel resummation of the series $\Phi(\check \lambda,t)$ :
\begin{itemize}
\item[(i)] For $t\in \mathbb{C}^{\times}$ with $|\mathrm{Re}(t)|<1/2$, let $G^{con}(\xi,t):=\mathcal{B}(\Phi(-,t))(\xi)$ denote the Borel transform of $\Phi(\check \lambda,t)$. Then $G^{con}(\xi,t)$ converges for $|\xi|<2\pi |t|$ and admits a series representation of the form
    \begin{equation}\label{Borel-sum}
        G^{con}(\xi,t) = \frac{1}{(2\pi )^2}\sum_{m\in\mathbb{Z} \setminus \{0\}}\frac{1}{m^3}
        \frac{1}{2\xi}\frac{\partial}{\partial \xi}
        \bigg(\frac{\xi^2}{1-e^{-2\pi \I t + \xi/m}}-\frac{\xi^2}{1-e^{-2\pi \I t - \xi/m}}\bigg),
\end{equation}
which can be used to analytically continue $G^{con}(\xi,t)$ in the $\xi$ variable to a meromorphic function with poles at $\xi=2\pi \I(t+k)m$ for $k \in \mathbb{Z}$ and $m \in \mathbb{Z}\setminus\{0\}$.
\item[(ii)] For $t\in \mathbb{C}-\mathbb{Z}$ and  $k\in \mathbb{Z}$ let $l_k:=\mathbb{R}_{<0}\cdot 2\pi \I(t+k)$ and $l_{\infty}:=\I\mathbb{R}_{<0}$. Given any ray $\rho$ from $0$ to $\infty$ different from $\{\pm l_{k}\}_{k\in \mathbb{Z}}\cup \{\pm l_{\infty}\}$, and $\lambda$ in the half-plane $\mathbb{H}_{\rho}$ centered at $\rho$, we define the Borel sum of $\widetilde{F}^{con}(\lambda,t)$ along $\rho$ as
\begin{equation}\label{BorelS}
    F^{con}_{\rho}(\lambda,t):=\frac{1}{\lambda^2} \mathrm{Li}_{3}(Q)+ \frac{B_2}{2}\mathrm{Li}_{1}(Q)+\int_{\rho}\mathrm d\xi\, e^{-\xi/\check{\lambda}} G^{con}(\xi,t)\,.
\end{equation}
\item[(iii)] Let $\rho_k$ be a ray in the sector determined by the Stokes rays $l_{k}$ and $l_{k-1}$. Then if $\mathrm{Im}(t)>0$, on the overlap of their domains of definition in the $\lambda$ variable we have
\begin{align}
    \phi_{\pm l_k}(\lambda,t)&:=F^{con}_{\pm \rho_{k+1}}(\lambda,t)
    -F^{con}_{\pm \rho_k}(\lambda,t) =\frac{1}{2\pi \I}\partial_{\check\lambda}\Big(\check{\lambda}\,\mathrm{Li}_2\big(e^{\pm 2\pi \I(t+k)/\check \lambda}\big)\Big)\;, \, \\
    &= \frac{1}{2\pi \I} \left( \mathrm{Li}_2\big(e^{\pm 2\pi \I(t+k)/\check \lambda}\big) \pm 2\pi \I \frac{(t+k)}{\check \lambda} \log \left(1- e^{\pm 2\pi \I(t+k)/\check \lambda} \right)\right)\;.
\end{align}
If $\mathrm{Im}(t)<0$, then the previous jumps also hold provided $\rho_{k+1}$ is interchanged with $\rho_{k}$ in the above formula. \\
\item[(iv)]   Let $\rho_k$ denote any ray between the rays $l_{k}$ and $l_{k-1}$. Furthermore, assume that $0<\mathrm{Re}(t)<1$, $\mathrm{Im}(t)>0$, $\mathrm{Re}(\lambda)>0$, $\mathrm{Im}(\lambda)<0$, and $\mathrm{Re}\, t < \mathrm{Re} (\check{\lambda}+1)$. Then
\begin{equation}
    \lim_{k\to \infty}F^{con}_{\rho_k}(\lambda,t)=\lim_{k\to \infty}F^{con}_{-\rho_k}(-\lambda,t)=\sum_{k=1}^\infty\frac{e^{2\pi \I k t}}{k\big(2\sin\big(\frac{\lambda k}{2}\big)\big)^2}\,.
\end{equation}
Furthermore, we can write the sum of the Stokes jumps along $l_k$ for $k\geq 0$ as
\begin{equation}
    \sum_{k=0}^{\infty}\phi_{l_k}(\lambda,t)=\frac{1}{2\pi \I }\partial_{\lambda}\Big(\lambda \sum_{m=1}^{\infty}\frac{Q'^l}{m^2(1-q'^m)}\Big), \;\;\;\; Q':=e^{2\pi \I t/\check\lambda},\;\; q':=e^{2\pi \I/\check{\lambda}}\;.
\end{equation}
If, on the other hand, we take $0<\mathrm{Re}(t)<1$, $\mathrm{Im}(t)>0$, $\mathrm{Re}(\lambda)>0$, $\mathrm{Im}(\lambda)>0$, $\mathrm{Re}(t)<\mathrm{Re}(\check\lambda +1)$ and furthermore assume that $|e^{2\pi \I t/\check{\lambda}}|<1$, then we also have
\begin{equation}
    \lim_{k\to -\infty}F^{con}_{\rho_k}(\lambda,t)=\lim_{k\to -\infty}F^{con}_{-\rho_k}(-\lambda,t)=\sum_{k=1}^\infty\frac{e^{2\pi \I k t}}{k\big(2\sin\big(\frac{\lambda k}{2}\big)\big)^2}.
\end{equation}
 \end{itemize}

$\lim_{k\to \infty}F^{con}_{\rho_k}(\lambda,t)$ differs from $F^{con}_{\mathbb{R}_{>0}}(\la,t)$ by the sum over all jumps 
$\phi_{l_k}(\lambda,t)$ for $k\geq 0$, leading to the decomposition, see \cite{Alim:2021ukq,Alim:2021mhp}:
\begin{equation}\label{Fnp-decomp}
F^{con}_{\BR_{>0}}(\la,t)= \sum_{k=1}^\infty\frac{e^{2\pi \I k t}}{k\big(2\sin\big(\frac{\lambda k}{2}\big)\big)^2}  -\frac{1}{2\pi \I }\partial_{\lambda}\Big(\lambda \sum_{m=1}^{\infty}\frac{Q'^l}{m^2(1-q'^m)}\Big).
\end{equation}

$F^{con}_{\mathbb{R}_{>0}}(\la,t)$ could furthermore be identified with the solution $\widetilde{F}^{con}_{\text{np}}(\lambda,t) $ of the difference equation \eqref{eq:diffeq}. 
\begin{equation}
\begin{split}
F^{con}_{\mathbb{R}_{>0}}(\la,t) =\widetilde{F}^{con}_{\text{np}}(\lambda,t) :=  \left(\frac{\pi \I}{6} \, B_{3,3}\left(t+\check\lambda\,|\,\check{\lambda},\check{\lambda},1 \right)\right) + \log\left(\sin_3\left(t+\check{\lambda} \,|\, \check{\lambda},\check{\lambda},1\right)\right).
\end{split}
\end{equation}

We proceed by giving the product structure of the partition functions obtained from the Borel sums corresponding to the free energies, these follow from the results of \cite{Alim:2021mhp} but were not contained in that paper.

\begin{prop}{Product form of the partition functions.}\label{prop:prod}\\
For the partition functions $$Z_{\rho}(\lambda,t):=\exp (F_{\rho}(\lambda,t))\,,$$ we have the following product forms:
\begin{align}
Z_{\rho_{\infty}} (\lambda,t)&= \prod_{m=0}^{\infty} (1- Q q^{m+1})^{m+1}\, , \\
 Z_{\rho_{0}} (\lambda,t) &= \prod_{m=0}^{\infty} (1-Q\, q^{m+1})^{m+1}  \cdot \exp\left(-\frac{1}{2\pi i} \operatorname{Li}_2(Q' q'^m)\right) \cdot (1-Q' q'^m)^{-\frac{2\pi}{\lambda}\left( t+m\right)}\,,\\
Z^{con}_{\rho_k}(\lambda,t)& = \prod_{m=0}^{\infty} (1-Q\, q^{m+1})^{m+1}  \cdot \prod_{m=k}^{\infty} \exp\left(-\frac{1}{2\pi i} \operatorname{Li}_2(Q' q'^m)\right) \cdot (1-Q' q'^m)^{-\frac{2\pi}{\lambda}\left( t+m\right)}\,,
\end{align}
where $Q=\exp(2\pi i t),\quad q=e^{\ii\lambda}\,, \quad Q'=\exp(2\pi i t/\check{\lambda})\,, \quad q'=\exp(2\pi i/\check{\lambda})$\,. For the two special rays $\rho_0$ and $\rho_{\infty}$ we will use special names for the corresponding partition functions:
$$  Z^{con}_{GV}(\lambda,t):= Z_{\rho_{\infty}} (\lambda,t) \, , \quad  Z^{con}_{np}(\lambda,t):= Z_{\rho_{0}} (\lambda,t)\,.$$
\end{prop}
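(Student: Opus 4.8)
The plan is to derive all three product forms directly from the Borel-sum data recalled in items (i)--(iv) above, so that no new analysis is required: the content is to convert the GV sum and the explicit Stokes jumps into infinite products and to track which jumps separate the rays $\rho_k$, $\mathbb{R}_{>0}$ and $\rho_\infty$.

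First I would establish the ``GV'' partition function $Z^{con}_{GV}=\exp(F_{\rho_\infty})$. By item (iv), $F_{\rho_\infty}(\lambda,t)=\sum_{k\ge1}Q^k/\big(k\,(2\sin(\lambda k/2))^2\big)$ with $Q=e^{2\pi\I t}$. Writing $q=e^{\I\lambda}$ one has $2\sin(\lambda k/2)=-\I(q^{k/2}-q^{-k/2})$, hence $(2\sin(\lambda k/2))^2=-q^{-k}(1-q^k)^2$ and $1/(2\sin(\lambda k/2))^2=-\sum_{m\ge1}m\,q^{km}$, using the geometric-series identity $x/(1-x)^2=\sum_{m\ge1}mx^m$ with $x=q^k$. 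Substituting and exchanging the order of the two summations turns $F_{\rho_\infty}$ into $\sum_{m\ge1}m\sum_{k\ge1}(Qq^m)^k/k=\sum_{m\ge1}m\log(1-Qq^m)$; exponentiating and shifting $m\mapsto m+1$ gives $\prod_{m\ge0}(1-Qq^{m+1})^{m+1}$, as claimed.

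Next I would read off the exponential of a single Stokes jump from the closed form on the second line of item (iii). With $Q'=e^{2\pi\I t/\check{\lambda}}$ and $q'=e^{2\pi\I/\check{\lambda}}$, so that $e^{2\pi\I(t+k)/\check{\lambda}}=Q'q'^k$, that formula reads $\phi_{l_k}=\frac{1}{2\pi\I}\mathrm{Li}_2(Q'q'^k)+\frac{t+k}{\check{\lambda}}\log(1-Q'q'^k)$. Exponentiating its negative yields exactly $\exp\!\big(-\frac{1}{2\pi\I}\mathrm{Li}_2(Q'q'^k)\big)\,(1-Q'q'^k)^{-\frac{1}{\check{\lambda}}(t+k)}$, which is precisely the factor indexed by $m=k$ in the claimed products.

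Finally I would assemble the three statements. The ray $\rho_k$ is separated from $\rho_\infty$ by exactly the Stokes rays $l_j$ with $j\ge k$, so $F_{\rho_k}=F_{\rho_\infty}-\sum_{j\ge k}\phi_{l_j}$; the case $k=0$ is the decomposition \eqref{Fnp-decomp}. Exponentiating and multiplying the product for $\exp(F_{\rho_\infty})$ from the first step by the product of the $\exp(-\phi_{l_j})$ for $j\ge k$ from the second step gives the stated form of $Z^{con}_{\rho_k}$, with $k=0$ recovering $Z^{con}_{np}=Z_{\rho_0}$ and $k\to\infty$ (empty second product) recovering $Z^{con}_{GV}$. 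The only genuinely delicate point is the bookkeeping: fixing the lower limit $m=k$ of the second product from the orientation and indexing of the Stokes rays crossed, and justifying the rearrangement of the double sum together with convergence of the infinite products within the $(\lambda,t)$-domains specified in (iii)--(iv). Since the hard analytic input---meromorphic continuation of the Borel transform and the explicit jumps---is already provided by \cite{Alim:2021mhp}, what remains is purely algebraic.
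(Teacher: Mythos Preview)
Your proposal is correct and follows essentially the same route as the paper's proof: the paper also expands $1/(2\sin(k\lambda/2))^2=-\sum_{m\ge1}m\,q^{km}$ to obtain $F_{\rho_\infty}=\sum_{m\ge1}m\log(1-Qq^m)$ and then multiplies in the exponentiated Stokes jumps $\exp(-\phi_{l_j})$, $j\ge k$, using \eqref{Fnp-decomp} for $k=0$. One cosmetic slip: in your chain ``$\sum_{m\ge1}m\sum_{k\ge1}(Qq^m)^k/k=\sum_{m\ge1}m\log(1-Qq^m)$'' you have silently dropped two minus signs (one from the sine expansion, one from $\sum_k x^k/k=-\log(1-x)$); they cancel, so the conclusion is right, but you should display them.
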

\begin{proof}
To prove the first product form we use the following expansion:
\begin{equation}\label{sineexpansion}
\frac{1}{\left(2\sin\left(k\lambda/2\right)\right)^2}= -\sum_{m=1}^{\infty} m e^{\ii m k \lambda}\,,
\end{equation}
valid for $\textrm{Im}\lambda >0 $ and which can be obtained from the geometric series by a derivation in $\lambda$. We have 

\begin{equation}
F_{GV}(\lambda,t):=F_{\infty}(\lambda,t)= \sum_{k=1}^\infty\frac{e^{2\pi \I k t}}{k\big(2\sin\big(\frac{\lambda k}{2}\big)\big)^2}\,,
\end{equation}
which can be re-written using the identity \eqref{sineexpansion} as:
\begin{equation}
F_{GV}(\lambda,t)= - \sum_{m=1}^{\infty} m \sum_{k=1}^\infty\frac{e^{2\pi \I k t} e^{\ii m k \lambda}}{k} =  \sum_{m=1}^{\infty} m \,\log(1-e^{2\pi \I  t} e^{\ii m  \lambda}) \,,
\end{equation}
the product form then follows by taking $Z^{con}_{GV}(\lambda,t)=\exp(F_{GV}(\lambda,t))$ and shifting the summation index.
The product form for $Z_{\rho_0}$ follows from the form:
\begin{equation}
F_{\rho_0}(\la,t)= F^{con}_{\mathbb{R}>0}(\lambda,t)= \sum_{k=1}^\infty\frac{e^{2\pi \I k t}}{k\big(2\sin\big(\frac{\lambda k}{2}\big)\big)^2}  -\sum_{k=0}^{\infty}\phi_{l_k}(\lambda,t)\,,
\end{equation}
and the expressions for the Stokes jumps:
\begin{align}
    \phi_{ l_k}(\lambda,t)=\frac{1}{2\pi \I} \left( \mathrm{Li}_2\big(e^{ 2\pi \I(t+k)/\check \lambda}\big) + 2\pi \I \frac{(t+k)}{\check \lambda} \log \left(1- e^{ 2\pi \I(t+k)/\check \lambda} \right)\right)\;,
\end{align}
the product expression then follows by exponentiation, the product form for $Z_{\rho_k}$ follows from taking into account the exponentiated Stokes factors.

\end{proof}

\subsection{Constant maps contributions}\label{sec:const}

For the formal series giving the constant map contributions to the GW potential for any CY we have
\begin{align}\label{formal}
F_{c}(\lambda,t) = \sum_{g=0}^{\infty} F^g_c &= F_c^0 (t) + F_c^1(t)+  \chi(X)  \sum_{g=2}^{\infty} \frac{ (-1)^{g-1}\, B_{2g}\, B_{2g-2}}{4g (2g-2)\, (2g-2)!} \lambda^{2g-2} \,, \quad \check{\lambda}=\frac{\lambda}{2\pi}\,\,\\
&= F_c^0 (t) + F_c^1(t)+ \Phi_c(\check{\lambda})\,, \quad \check{\lambda}=\frac{\lambda}{2\pi}\,\,,
\end{align} 
the terms at genus $0$ and $1$ are $t$ dependent and encode also classical topological data of the CY such as the triple intersections and second Chern class, see \cite{pandharipande2003three} for details.

In \cite{Alim:2021mhp}, the case of the resolved conifold was treated for which $\chi(X)=2$ and the constant map contribution has the form:
\begin{align} \label{constmapcon}
F^{con}_{c}(\lambda,t)&= F_c^0 (t) + F_c^1(t)+   \sum_{g=2}^{\infty} \frac{ (-1)^{g-1}\, B_{2g}\, B_{2g-2}}{2g (2g-2)\, (2g-2)!} \lambda^{2g-2} \,, \quad \check{\lambda}=\frac{\lambda}{2\pi}\,\\
&= F_c^0 (t) + F_c^1(t)+  \Phi_c^{con}(\check{\lambda})\,.
\end{align} 

It was shown in \cite{Alim:2021mhp} that the Borel transform of $ \Phi_c^{con}(\check{\lambda})$ is given by:
\begin{equation}
G_c^{con}(\xi)= -G^{con}(\xi,0)-\frac{1}{12\xi}\,,
\end{equation}
and that correspondingly the Borel sums of the constant map contributions are given by
\begin{equation}
F^{con}_{c,\rho}(\lambda)= - F^{con}_{\rho}(\lambda,0)-\frac{1}{12}\log \check{\lambda}+C\,,
\end{equation}
with $C$ a constant which does not play a role. 
We thus obtain that the Borel transform of $\Phi_c(\check{\lambda})$ in the more general case is given by:
    \begin{equation}\label{Borel-sum-const}
        G_c(\xi) =-\frac{\chi(X)}{24\xi} - \frac{\chi(X)}{2(2\pi )^2}\sum_{m\in\mathbb{Z} \setminus \{0\}}\frac{1}{m^3}
        \frac{1}{2\xi}\frac{\partial}{\partial \xi}
        \bigg(\frac{\xi^2}{1-e^{ \xi/m}}-\frac{\xi^2}{1-e^{- \xi/m}}\bigg),
\end{equation}
which can be analytically continued in the $\xi$ variable to a meromorphic function with poles at $\xi=2\pi \I k $ for $k \in \mathbb{Z}$ .

Furthermore given any ray $\rho$ from $0$ to $\infty$ different from $\{\pm l_{\infty}\}$, and $\lambda$ in the half-plane $\mathbb{H}_{\rho}$ centered at $\rho$, the Borel sum of $F_c(\lambda)$ along $\rho$ is given by
\begin{equation}\label{BorelSconst}
    F_{c,\rho}(\lambda):=\frac{1}{\lambda^2} F^0_c(t)+ F^1_c(t) +\int_{\rho}\mathrm d\xi\, e^{-\xi/\check{\lambda}} G_c(\xi)\,.
\end{equation}

\begin{prop}{Product form of the partition function.}\label{prop:prodconst}\\
For the partition function $$ Z^{con}_{c,np}(\lambda):= \exp (F^{con}_{c,\rho}(\lambda)) \check{\lambda}^{-1/12}e^C $$ 

we have the following product form:
\begin{align}
Z^{con}_{c,np}(\lambda) =\left(e^C \check{\lambda}^{-1/12} \exp\left(\frac{\ii \pi}{12} \right) \prod_{m=0}^{\infty} (1- q^{m+1})^{m+1}  \cdot \exp\left(-\frac{1}{2\pi i} \operatorname{Li}_2(q'^m)\right) \cdot (1-q'^m)^{-\frac{ 2\pi m}{\lambda}}\right)^{-1}\,,
\end{align}
where  $ q= \exp( i \lambda) \quad q'=\exp(2\pi i/\check{\lambda})$\,,
\end{prop}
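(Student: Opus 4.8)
The plan is to reduce the statement to the product form of $Z^{con}_{\rho_0}(\lambda,t)$ already established in Proposition~\ref{prop:prod}, evaluated in the limit $t\to 0$. Exponentiating the Borel-sum relation for the constant-map contribution derived above, namely $F^{con}_{c,\rho}(\lambda)=-F^{con}_{\rho}(\lambda,0)-\tfrac{1}{12}\log\check{\lambda}+C$, gives directly
\begin{equation}
Z^{con}_{c,np}(\lambda)=Z^{con}_{\rho_0}(\lambda,0)^{-1}\,\check{\lambda}^{-1/12}\,e^{C}\,,
\end{equation}
so everything reduces to computing $Z^{con}_{\rho_0}(\lambda,0)$ as the limit $t\to0$ of the three-factor product of Proposition~\ref{prop:prod}.

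First I would treat the three factors separately. In the relevant half-plane $\mathbb{H}_{\rho_0}$ one has $|q|<1$ and $|q'|<1$, so each product converges and its tail $m\ge 1$ is regular at $t=0$. The first factor is unproblematic: setting $t=0$ (so $Q=1$) and reindexing turns $\prod_{m\ge0}(1-Qq^{m+1})^{m+1}$ into $\prod_{m\ge1}(1-q^m)^{m}$, the desired MacMahon-type factor. The delicacy lies entirely in the $m=0$ terms of the second and third factors, which are individually singular or indeterminate at $t=0$.

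For the dilogarithm factor the $m=0$ term limits to $\exp\!\left(-\tfrac{1}{2\pi\ii}\,\textrm{Li}_2(1)\right)$, and since $\textrm{Li}_2(1)=\pi^2/6$ this equals $\exp\!\left(\tfrac{\ii\pi}{12}\right)$, producing the isolated prefactor in the statement; the remaining terms assemble into $\prod_{m\ge1}\exp\!\left(-\tfrac{1}{2\pi\ii}\textrm{Li}_2(q'^m)\right)$. For the power factor the $m=0$ term is the $0^0$-type expression $(1-e^{2\pi\ii t/\check{\lambda}})^{-t/\check{\lambda}}$; writing $s=t/\check{\lambda}$ and taking logarithms, $1-e^{2\pi\ii s}=-2\pi\ii s+O(s^2)$ gives $-s\log(1-e^{2\pi\ii s})\to 0$ as $s\to0$ because $s\log s\to 0$, so this term tends to $1$ and drops out, leaving $\prod_{m\ge1}(1-q'^m)^{-m/\check{\lambda}}$. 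Collecting the three limits gives
\begin{equation}
Z^{con}_{\rho_0}(\lambda,0)=\exp\!\left(\tfrac{\ii\pi}{12}\right)\prod_{m=1}^{\infty}(1-q^m)^{m}\,\exp\!\left(-\tfrac{1}{2\pi\ii}\textrm{Li}_2(q'^m)\right)(1-q'^m)^{-m/\check{\lambda}}\,,
\end{equation}
and inserting this into the relation above yields the claimed product form.

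The main obstacle is justifying that the limit $t\to0$ may be taken inside the infinite product, i.e.\ that the isolated $m=0$ contributions can be split off while the remaining tail converges uniformly near $t=0$. This calls for a dominated- or uniform-convergence estimate on the tail for $\lambda\in\mathbb{H}_{\rho_0}$ (where $|q|,|q'|<1$), together with the elementary limit $\lim_{s\to0}s\log(1-e^{2\pi\ii s})=0$ controlling the single indeterminate factor; once these are in place the remaining manipulations are routine bookkeeping of the constant prefactors $e^{C}$, $\check{\lambda}^{-1/12}$ and $\exp(\tfrac{\ii\pi}{12})$.
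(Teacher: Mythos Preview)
Your proposal is correct and follows exactly the paper's approach: the paper's proof is the one-liner ``This follows from evaluating the $t\to 0$ limit of the conifold partition function,'' and you have simply supplied the details of that limit, in particular the handling of the $m=0$ factors that produce $\exp(\ii\pi/12)$ and the indeterminate $(1-Q')^{-t/\check{\lambda}}\to 1$.
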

\begin{proof}
This follows from evaluating the $t\rightarrow 0$ limit of the conifold partition function.
\end{proof}
\begin{rem}
We note that $$ \prod_{m=0}^{\infty} (1- q^{m+1})^{-m-1}$$ corresponds to the MacMahon function and that $Z^{con}_{c,np}(\lambda)$ is a non-perturbative completion thereof which is likely interesting in its own right, this has potentially an interpretation as a quantum modular form \cite{zagier2010quantum}.
\end{rem}
For the constant map contributions for Gromov-Witten theory on an arbitrary CY manifold $X$, we introduce the following non-perturbative generating function which is valid up to the genus $0$ and $1$ terms which should be included separately.

\begin{equation}
Z_{c,np}(\lambda)= \left( \prod_{m=0}^{\infty} (1- q^{m+1})^{m+1}  \cdot \exp\left(-\frac{1}{2\pi i} \operatorname{Li}_2(q'^m)\right) \cdot (1-q'^m)^{-\frac{ 2\pi m}{\lambda}}\right)^{-\chi(X)/2}\,.
\end{equation}

\section{Non-perturbative topological strings}\label{sec:nonpert}

\subsection{Sheaf invariants}\label{subsec:sheafinv}
Our first goal is to re-write the topological string free energy on an arbitrary CY threefold in terms of the free energy of the resolved conifold. This will be possible using a re-arrangement of the GV invariants in terms of a set of sheaf invariants $\Omega_{\beta,n}$  defined in the following way:
\begin{equation}\label{defsheafinv}
\sum_{n\in \mathbb{Z}\,} \Omega_{\beta,n} y^n = \sum_{g\ge0} [\textrm{GV}]_{\beta,g}\, (-1)^g (y^{1/2}-y^{-1/2})^{2g}\,,
\end{equation}
since the RHSis invariant under $y\rightarrow y^{-1}$ we have:
\begin{equation}\label{reflection}
\Omega_{\beta,n}=\Omega_{\beta,-n}\,, \forall n \in \mathbb{Z}\,,
\end{equation}
these are up to a sign the sheaf invariants put forward by Maulik and Toda \cite{MaulikToda}, following works of Hosono-Saito-Takahashi \cite{HST} and Kiem-Li \cite{KiemLi} to define the Gopakumar-Vafa invariants in terms of moduli spaces of sheaves, we are using the notation of \cite{Stoppa} which should also be consulted for further details and references.

It is useful to note the relation of how each $\Omega_{\beta,k}\, $ can be obtained from the GV invariants. We recall the finiteness of the GV invariants discussed in \ref{sec:finiteness}, stating that for every curve class $\beta$, there exists a $g_{\beta}\in \mathbb{N}$ such that 
$$[\textrm{GV}]_{\beta,g'}=0 \,, \quad \forall g'\ge g_{\beta}\,,$$
by comparing the $y^k$ coefficient on both sides of \eqref{defsheafinv} we obtain:
\begin{equation}
\Omega_{\beta,k}=  \left\{
	\begin{array}{ll}
		(-1)^k \sum_{g=k}^{g_{\beta}-1}   [\textrm{GV}]_{\beta,g} \binom{2g}{g-k} \,, \quad k\ge 0\,,   & \mbox{if } k < g_{\beta} \\
		0 & \mbox{if }  k \ge g_{\beta}
	\end{array}
\right.\end{equation}
the $k<0$ invariants are obtained by \eqref{reflection}. 

We further note that accordingly the sum $\sum_{n\in\mathbb{Z}}\Omega_{\beta,n}$ is finite. This sum is equal to the limit $y\rightarrow 1$ of the LHS of
\begin{equation*}
\sum_{n\in \mathbb{Z}\,} \Omega_{\beta,n} y^n = \sum_{g\ge0} [\textrm{GV}]_{\beta,g}\, (-1)^g (y^{1/2}-y^{-1/2})^{2g}\,,
\end{equation*}
evaluating the limit on the RHS we obtain:
\begin{equation}\label{eq:sumOmega}
\sum_{n\in \mathbb{Z}} \Omega_{\beta,n}  = [\textrm{GV}]_{\beta,0}\,.
\end{equation}

\subsection{Free energy in terms of the resolved conifold}

\begin{thm}\label{thm:conifolds}
The generating function of GW invariants, excluding the classical terms as well as the contributions from constant maps for any Calabi-Yau threefold:
\begin{equation}
\tilde{F}_{GW}(\lambda,t)= \sum_{g=0}^{\infty}\sum_{\beta>0}  \lambda^{2g-2} [GW]_{\beta,g} Q^{\beta}\,,
\end{equation} 
can be written as:
\begin{align}
\tilde{F}_{GW}(\lambda,t)&=  \sum_{n\in \mathbb{Z}}  \sum_{\beta>0}   \Omega_{\beta,n} \widetilde{F}^{con} (\lambda, t^{\beta,n}) \, \\
&=\sum_{n\in \mathbb{Z}}  \sum_{\beta>0}   \Omega_{\beta,n}  \left( \frac{1}{\lambda^2} \mathrm{Li}_{3}(Q^{\beta,n})+ \sum_{g=1}^{\infty} \lambda^{2g-2} \frac{(-1)^{g-1}B_{2g}}{2g (2g-2)!} \mathrm{Li}_{3-2g} (Q^{\beta,n}) \, \right)\,,
\end{align}
where $t^{\beta,n}:=t^{\beta}+ n \check{\lambda}\,,Q^{\beta,n}:=\exp(2\pi \ii t^{\beta,n})$ and $\Omega_{\beta,n}$ are the integers defined in \eqref{defsheafinv}. A similar relation holds for the generating function of GV invariants:
\begin{equation}
F_{GV}(\lambda,t)  = \sum_{\beta>0}\sum_{g\ge 0} [\textrm{GV}]_{\beta,g}\, \sum_{k\ge 1} \frac{1}{k} \left( 2 \sin \left( \frac{k\lambda}{2}\right)\right)^{2g-2} Q^{k\beta}\,,
\end{equation}
namely:
\begin{align}
F_{GV}(\lambda,t)  &= \sum_{n\in \mathbb{Z}}  \sum_{\beta>0}   \Omega_{\beta,n}  \sum_{m=1}^{\infty} \log\left(1-   Q^{\beta}   \exp(\ii (n+m) \,\lambda) \right)^m  \,,\\
&=  \sum_{n\in \mathbb{Z}}  \sum_{\beta>0}   \Omega_{\beta,n} F_{GV}^{con} (\lambda, t^{\beta}+ n \check{\lambda}) \,.
\end{align}

\end{thm}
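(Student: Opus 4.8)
The plan is to establish the Gopakumar--Vafa form of the identity first, as an exact equality of resummed functions, and then to read off the Gromov--Witten form by passing to asymptotic expansions. The whole argument rests on a single substitution into the defining relation \eqref{defsheafinv} for the sheaf invariants: evaluating the generating polynomial in $y$ at $y=q^k$, where $q=e^{\ii\lambda}$. First I would record the building block in series form. By Proposition \ref{prop:prod} (or directly from \eqref{GVresum} specialized to a single genus-zero class), the conifold GV free energy is $F_{GV}^{con}(\lambda,t)=\sum_{k\ge1}\frac{1}{k}\,(2\sin(k\lambda/2))^{-2}Q^k$ with $Q=e^{2\pi\ii t}$. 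Since $\check\lambda=\lambda/(2\pi)$, the shift $t\mapsto t^{\beta,n}=t^\beta+n\check\lambda$ produces $Q^{\beta,n}=\exp(2\pi\ii t^{\beta,n})=Q^\beta q^n$, so that
$$F_{GV}^{con}(\lambda,t^{\beta,n})=\sum_{k\ge1}\frac{Q^{k\beta}\,q^{kn}}{k\,(2\sin(k\lambda/2))^2}\,.$$

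The core step is then to sum against $\Omega_{\beta,n}$ and invoke the defining relation. Interchanging the (finite) sum over $n$ with the sum over $k$ gives $\sum_n\Omega_{\beta,n}F_{GV}^{con}(\lambda,t^{\beta,n})=\sum_{k\ge1}\frac{Q^{k\beta}}{k(2\sin(k\lambda/2))^2}\big(\sum_n\Omega_{\beta,n}\,q^{kn}\big)$. Substituting $y=q^k$ into \eqref{defsheafinv} and using $y^{1/2}-y^{-1/2}=q^{k/2}-q^{-k/2}=2\ii\sin(k\lambda/2)$, the sign factors combine as $(-1)^g(2\ii\sin(k\lambda/2))^{2g}=(2\sin(k\lambda/2))^{2g}$, so that $\sum_n\Omega_{\beta,n}q^{kn}=\sum_{g\ge0}[\textrm{GV}]_{\beta,g}(2\sin(k\lambda/2))^{2g}$. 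This factor cancels against $(2\sin(k\lambda/2))^{-2}$ up to the exponent $2g-2$, and after summing over $\beta>0$ one recovers \eqref{GVresum} exactly. The log-product form of the statement then follows from the expansion $F_{GV}^{con}(\lambda,t)=\sum_{m\ge1}m\log(1-Qq^m)$ established in the proof of Proposition \ref{prop:prod}, together with $Q^{\beta,n}q^m=Q^\beta q^{n+m}$.

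For the Gromov--Witten form I would invoke that $F_{GV}$ and $\tilde F_{GW}$ share the same asymptotic expansion as $\lambda\to0$, and that $\widetilde F^{con}(\lambda,t)$ of \eqref{resconfree} is precisely the asymptotic expansion of $F_{GV}^{con}(\lambda,t)$. Substituting $t\mapsto t^{\beta,n}$ into the $\mathrm{Li}$-expansion, keeping $Q^{\beta,n}$ as the argument, and chaining the three relations then yields the first two displays of the theorem.

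The main obstacle is bookkeeping rather than conceptual: one must justify the interchange of summation, which is exactly where the finiteness of GV invariants (Section \ref{sec:finiteness}) is essential — for each fixed $\beta$ only finitely many $[\textrm{GV}]_{\beta,g}$, hence finitely many $\Omega_{\beta,n}$, are nonzero, so the inner sum $\sum_n\Omega_{\beta,n}q^{kn}$ is a genuine finite Laurent polynomial in $q^k$ and the rearrangement is legitimate order by order in $Q^\beta$. The one genuinely delicate point to state carefully is the meaning of the $\lambda$-dependent shift $n\check\lambda$ in the Gromov--Witten form: since $Q^{\beta,n}=Q^\beta q^n$ is itself $\lambda$-dependent, the right-hand side must be read as the asymptotic expansion of $\sum_{n,\beta}\Omega_{\beta,n}F_{GV}^{con}(\lambda,t^{\beta,n})$ rather than as a naive power series with $\lambda$-independent coefficients, and it is the \emph{exact} GV identity proved above that guarantees consistency.
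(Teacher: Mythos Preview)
Your proposal is correct and follows essentially the same approach as the paper: the key step in both is the substitution $y=q^k=e^{\ii k\lambda}$ into the defining relation \eqref{defsheafinv}, which converts $\sum_n\Omega_{\beta,n}q^{kn}$ into $\sum_{g\ge0}[\textrm{GV}]_{\beta,g}(2\sin(k\lambda/2))^{2g}$ and thereby identifies the GV free energy with the $\Omega$-weighted sum of shifted conifold free energies; the log-product form and the passage to the GW asymptotic expansion then follow exactly as in the paper. The only cosmetic difference is that you run the identity from the conifold sum to $F_{GV}$ while the paper runs it from $F_{GV}$ to the conifold sum, and you make the finiteness of the $n$-sum (hence the legitimacy of the rearrangement) more explicit than the paper does.
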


\begin{proof}

We start with the Gopakumar-Vafa form of the topological string free energy:

\begin{equation}
F_{GV}(\lambda,t)= \sum_{\beta>0} \sum_{k\ge 1}  \frac{Q^{k\beta}}{k \left( 2 \sin \left( \frac{k\lambda}{2}\right)\right)^{2}}  \sum_{g\ge 0} [\textrm{GV}]_{\beta,g}\, \left( 2 \sin \left( \frac{k\lambda}{2}\right)\right)^{2g} \,,
\end{equation}

using \eqref{defsheafinv} with $y=\exp(\ii k\lambda)$ this becomes:

\begin{equation}\label{eq:GVinTermsofOmega}
F_{GV}(\lambda,t)= \sum_{\beta>0} \sum_{k\ge 1}  \frac{Q^{k\beta}}{k \left( 2 \sin \left( \frac{k\lambda}{2}\right)\right)^{2}} \sum_{n\in \mathbb{Z}\,} \Omega_{\beta,n} \exp(\ii  n k\lambda) \,,
\end{equation}

using the following identity which can be derived from the geometric series:

\begin{equation}
\frac{1}{\left( 2\sin \left(k\lambda/2\right)\right)^2} = -\sum_{m=1}^{\infty} m \exp(\ii m k \lambda)\,,
\end{equation}
we obtain:
\begin{equation}
F_{GV}(\lambda,t)=- \sum_{m=1}^{\infty} \sum_{n\in \mathbb{Z}} \sum_{\beta>0} \sum_{k\ge 1}  \frac{m\, Q^{k\beta}}{k} \Omega_{\beta,n}  \exp(\ii (n+m) k\,\lambda) \,,
\end{equation}

we can further re-write this as:

\begin{equation}
F_{GV}(\lambda,t)=  \sum_{n\in \mathbb{Z}}  \sum_{\beta>0}   \Omega_{\beta,n}  \sum_{m=1}^{\infty} \log\left(1-   Q^{\beta}   \exp(\ii (n+m) \,\lambda) \right)^m  \,,
\end{equation}

which can be expressed in terms of the free energy of the resolved conifold as:
\begin{align}
F_{GV}(\lambda,t)=  \sum_{n\in \mathbb{Z}}  \sum_{\beta>0}   \Omega_{\beta,n} F_{GV}^{con} (\lambda, t^{\beta}+ n \check{\lambda}) \, .
\end{align}
In this last expression we introduce $t^{\beta,n}:=  t^{\beta}+ n \check{\lambda}$ and consider this as a formal variable and look at the asymptotic expansion of each summand in $\lambda$ keeping $t^{\beta,n}$ fixed, obtaining  the following formal series, which is asymptotic in $\lambda\rightarrow 0$:
\begin{equation}
\tilde{F}_{GW}(\lambda,t)=  \sum_{n\in \mathbb{Z}}  \sum_{\beta>0}   \Omega_{\beta,n}  \left( \frac{1}{\lambda^2} \mathrm{Li}_{3}(Q^{\beta,n})+ \sum_{g=1}^{\infty} \lambda^{2g-2} \frac{(-1)^{g-1}B_{2g}}{2g (2g-2)!} \mathrm{Li}_{3-2g} (Q^{\beta,n}) \, \right)\,,
\end{equation}
where $Q^{\beta,n}:=\exp(2\pi \ii t^{\beta,n}).$

\end{proof}

\subsection{Non-perturbative free energy}

Having expressed the generating function of GW invariants in an arbitrary CY as a sum over generating functions of the GW invariants of the resolved conifold, it is now natural to use the previously obtained analytic functions in the topological string coupling to put forward a non-perturbative expression for the topological string free energy in the holomorphic limit.

The expression of the asymptotic series of GW invariants of an arbitrary CY threefold in terms of the resolved conifold:
\begin{equation}
\tilde{F}_{GW}(\lambda,t)= \sum_{n\in \mathbb{Z}}  \sum_{\beta>0}   \Omega_{\beta,n} F_{GW}^{con} (\lambda, t^{\beta}+ n \check{\lambda}) \,,
\end{equation}

suggests the use of the resurgence analysis of the resolved conifold to define the following:
\begin{align}
\widetilde{F}_{top,np}(\lambda,t) &= \sum_{n\in \mathbb{Z}}  \sum_{\beta>0}   \Omega_{\beta,n} F^{con}_{np} (\lambda, t^{\beta}+ n \check{\lambda}) \, \\
&= \sum_{n\in \mathbb{Z}}  \sum_{\beta>0}   \Omega_{\beta,n} \log\left( \mathcal{S}_3(t+ n \check{\lambda}| \check{\lambda},1)\right)\,.
\end{align}
On the RHS of this equation each summand has good analytic properties in $\lambda$ from the triple sine function, to be able to say whether this sum also defines a good analytic function one must however be careful with the summation coefficients, which are, especially for compact Calabi-Yau manifolds generically unknown. We note however that for each curve class $\beta$ the summation of the coefficients $\Omega_{\beta,n}$ is given by \eqref{eq:sumOmega}:
\begin{equation}
\sum_{n\in \mathbb{Z}} \Omega_{\beta,n}  = [\textrm{GV}]_{\beta,0}\,,
\end{equation}
although now there are generically infinitely many $ [\textrm{GV}]_{\beta,0}$ which are non-vanishing for compact CY manifolds, it is known by mirror symmetry that all these invariants are generated by rational functions on the moduli space of the mirror manifolds, the Yukawa couplings, see \cite{CoxKatz} for more details. It seems thus reasonable to expect that the proposed non-perturbative expression could correspond to an actual function. We will later see however that although the issue with potential growth of the $\Omega_{\beta,n}$ remains the same as for the perturbative expansion, the non-perturbative corrections obtained in this way nicely assemble into a function which will be given in sec. \ref{sec:defprepot}.

We then have the following:
\begin{prop}
Define $$\tilde{Z}_{top,np}(\lambda,t)= \exp\left( \tilde{F}_{top,np}(\lambda,t) \right)$$
then we have the following product expansion:
\begin{align}
\tilde{Z}_{top,np}(\lambda,t)&:=  Z_{top,wk}(\lambda,t) \times Z_{top,sg}(\lambda,t)
\end{align}
where
\begin{align}
 Z_{top,wk}(\lambda,t)&:= \prod_{n\in \mathbb{Z}}  \prod_{\beta>0}  \prod_{m=1}^{\infty}   \left(1-   Q^{\beta}   \exp(\ii (n+m) \,\lambda) \right)^{m\, \Omega_{\beta,n}}\,,\\
 Z_{top,sg}(\lambda,t) &:=   \prod_{\beta>0} \exp\left(-\frac{[\textrm{GV}]_{\beta,0}}{2\pi i} \operatorname{Li}_2(Q'^{\beta} q'^m)\right) \cdot (1-Q'^{\beta} q'^m)^{-\frac{ [\textrm{GV}]_{\beta,0}}{\check{\lambda}}\left( t^{\beta}+m\right)}\,,
\end{align}
$t^{\beta,n}:=t^{\beta}+ n \check{\lambda}\,, t^{\beta,0}:=t^{\beta}\, , Q^{\beta,n}:=\exp(2\pi \ii t^{\beta,n})\,, Q^{\beta}:=Q^{\beta,0}$
and $q=e^{\ii\lambda}\,, \quad Q'=\exp(2\pi i t/\check{\lambda})\,, \quad q'=\exp(2\pi i/\check{\lambda})$\,. The subscripts $wk$ and $sg$ stand for weak and strong and refer to the pieces of the partition function which are expansions in $q$ or $q'$ respectively.

\end{prop}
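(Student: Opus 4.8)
The plan is to exponentiate the defining sum and reduce everything to the single-conifold product already established in Proposition~\ref{prop:prod}. Since $\tilde{F}_{top,np}(\lambda,t)=\sum_{n\in\mathbb{Z}}\sum_{\beta>0}\Omega_{\beta,n}\,F^{con}_{np}(\lambda,t^{\beta}+n\check{\lambda})$, exponentiating yields
$$\tilde{Z}_{top,np}(\lambda,t)=\prod_{n\in\mathbb{Z}}\prod_{\beta>0}\left(Z^{con}_{np}(\lambda,t^{\beta}+n\check{\lambda})\right)^{\Omega_{\beta,n}}\,,$$
into which I would insert the explicit product form of $Z^{con}_{np}$ from Proposition~\ref{prop:prod}, with $t$ replaced by $t^{\beta}+n\check{\lambda}$ throughout. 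The whole argument then amounts to tracking the substitution and collecting the two resulting cancellations.

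The heart of the computation is to see how the four variables $Q,q,Q',q'$ transform under $t\mapsto t^{\beta}+n\check{\lambda}$. Using $\check{\lambda}=\lambda/2\pi$, so that $2\pi\ii\check{\lambda}=\ii\lambda$ and $e^{2\pi\ii n}=1$ for integer $n$, I would record that $Q=e^{2\pi\ii t}\mapsto Q^{\beta}q^{n}$ while $q$ is unchanged, and crucially that $Q'=e^{2\pi\ii t/\check{\lambda}}\mapsto e^{2\pi\ii t^{\beta}/\check{\lambda}}\,e^{2\pi\ii n}=Q'^{\beta}$ becomes \emph{independent of $n$}, with $q'$ unchanged. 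This $n$-independence of the strong-coupling variable is the first of the two cancellations responsible for the surprising appearance of only genus-zero invariants.

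I would then split each conifold factor into its $q$-dependent (weak) piece $\prod_{m\ge0}(1-Q^{\beta}q^{n}q^{m+1})^{m+1}$ and its $q'$-dependent (strong) piece. For the weak piece the only manipulation is a shift of the product index by one, after which taking the product over $\beta,n$ with exponents $(m+1)\Omega_{\beta,n}$ reproduces $Z_{top,wk}(\lambda,t)$, using $q^{n+m}=e^{\ii(n+m)\lambda}$. For the strong piece, after the substitution the factor $\exp(-\tfrac{1}{2\pi i}\mathrm{Li}_2(Q'^{\beta}q'^m))$ and the part $(1-Q'^{\beta}q'^m)^{-(t^{\beta}+m)/\check{\lambda}}$ are manifestly $n$-independent, so the product over $n$ raises them to the power $\sum_n\Omega_{\beta,n}=[\mathrm{GV}]_{\beta,0}$ by \eqref{eq:sumOmega}, giving $Z_{top,sg}(\lambda,t)$ (with the understood product over $m\ge0$). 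The residual $n$-dependence sits entirely in the term $(1-Q'^{\beta}q'^m)^{-n}$ arising from $-\tfrac{1}{\check{\lambda}}(t^{\beta}+n\check{\lambda}+m)=-\tfrac{1}{\check{\lambda}}(t^{\beta}+m)-n$.

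The step I expect to carry the real content is showing that this residual linear-in-$n$ term drops out: its contribution to the total exponent is $\sum_n(-n)\,\Omega_{\beta,n}$, which vanishes by the reflection symmetry $\Omega_{\beta,n}=\Omega_{\beta,-n}$ of \eqref{reflection}. This is the second cancellation, and it is the crux of the proposition. I would flag that these steps rearrange triple products over $(\beta,n,m)$, so one should note that for each fixed $\beta$ the $n$-sum has finite support by the finiteness of the sheaf invariants (sec.~\ref{sec:finiteness}), which both legitimizes the reindexing and makes $\sum_n n\,\Omega_{\beta,n}$ a genuine finite antisymmetric sum; the $\beta$-product is then handled formally on the same footing as the perturbative series. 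Assembling the weak and strong pieces then gives $\tilde{Z}_{top,np}=Z_{top,wk}\times Z_{top,sg}$ as claimed.
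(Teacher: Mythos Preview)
Your proposal is correct and follows essentially the same route as the paper: exponentiate the defining sum, insert the conifold product from Proposition~\ref{prop:prod}, observe that the primed variables become $n$-independent, and then use $\sum_n\Omega_{\beta,n}=[\mathrm{GV}]_{\beta,0}$ together with $\sum_n n\,\Omega_{\beta,n}=0$ from the reflection symmetry to obtain the stated form of $Z_{top,sg}$. Your additional remark about finite support of the $n$-sum is a small extra care not made explicit in the paper but otherwise the arguments coincide.
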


\begin{proof}
From the definition
\begin{align}
\tilde{F}_{top,np}(\lambda,t) &= \sum_{n\in \mathbb{Z}}  \sum_{\beta>0}   \Omega_{\beta,n} F^{con}_{np} (\lambda, t^{\beta}+ n \check{\lambda}) \,, 
\end{align}
we obtain:
\begin{align}
Z_{top,np}(\lambda,t) = \prod_{n\in \mathbb{Z}}  \prod_{\beta>0}    Z^{con}_{np} (\lambda, t^{\beta}+ n \check{\lambda})^{\Omega_{\beta,n}} \,, 
\end{align}
we can now use the product expansion of \eqref{prop:prod}:

$$Z^{con}_{np}(\lambda,t)= \prod_{m=0}^{\infty} (1-Q\, q^{m+1})^{m+1}  \cdot \exp\left(-\frac{1}{2\pi i} \operatorname{Li}_2(Q' q'^m)\right) \cdot (1-Q' q'^m)^{-\frac{2\pi}{\lambda}\left( t+m\right)}\,, $$

to obtain:
\begin{align}
Z_{top,np}(\lambda,t) = \prod_{n\in \mathbb{Z}}  \prod_{\beta>0}   \left( \prod_{m=0}^{\infty} (1-Q^{\beta,n}\, q^{m+1})^{m+1}  \cdot \exp\left(-\frac{1}{2\pi i} \operatorname{Li}_2(Q'^{\beta} q'^m)\right) \cdot (1-Q'^{\beta} q'^m)^{-\frac{2\pi}{\lambda}\left( t^{\beta,n}+m\right)}\,\right) ^{\Omega_{\beta,n}} \,, 
\end{align}
where $t^{\beta,n}:=t^{\beta}+ n \check{\lambda}\,, t^{\beta,0}:=t^{\beta}\, , Q^{\beta,n}:=\exp(2\pi \ii t^{\beta,n})\,, Q^{\beta}:=Q^{\beta,0}$ as before, and the first surprise is that the primed variables are independent of $n$ since:
$$ Q'^{\beta,n}:=\exp(2\pi \ii t^{\beta,n}/\check{\lambda})=\exp(2\pi \ii t^{\beta}/\check{\lambda})\cdot \exp(2\pi \ii n)=Q'^{\beta}\,, \quad \forall n\in \mathbb{Z}\,,$$
we define
\begin{equation}
Z_{np,sg}(\lambda,t):=\prod_{n\in \mathbb{Z}}  \prod_{\beta>0} \prod_{m=0}^{\infty}  \left(\exp\left(-\frac{1}{2\pi i} \operatorname{Li}_2(Q'^{\beta} q'^m)\right) \cdot (1-Q'^{\beta} q'^m)^{-\frac{2\pi}{\lambda}\left( t^{\beta,n}+m\right)}\,\right)^{\Omega_{\beta,n}} \,, 
\end{equation}
and continue to show that it is of the desired form by first writing:
\begin{equation}
Z_{np,sg}(\lambda,t):=\prod_{\beta>0}  \prod_{m=0}^{\infty}   \left(\exp\left(-\frac{1}{2\pi i} \sum_{n\in\mathbb{Z}} \Omega_{\beta,n}\, \operatorname{Li}_2(Q'^{\beta} q'^m)\right) \cdot (1-Q'^{\beta} q'^m)^{-\frac{2\pi}{\lambda} \sum_{n\in\mathbb{Z}} \Omega_{\beta,n} \left( t^{\beta,n}+m\right)}\,\right) \,, 
\end{equation}
noting that the argument of the dilogarithm does not depend on $n$, the prefactor becomes $\sum_{n\in\mathbb{Z}} \Omega_{\beta,n}$. This sum is equal equal to \eqref{eq:sumOmega}\begin{equation}
\sum_{n\in \mathbb{Z}} \Omega_{\beta,n}  = [\textrm{GV}]_{\beta,0}\,.
\end{equation}
We now consider the exponent of the second factor:

\begin{align}
-\frac{1 }{\check{\lambda}} \sum_{n\in\mathbb{Z}} \Omega_{\beta,n} \left( t^{\beta,n}+m\right) &=-\frac{1 }{\check{\lambda}} \sum_{n\in\mathbb{Z}} \Omega_{\beta,n} \left( t^{\beta}+  n\check{\lambda} +m\right)\\ &=-\frac{1 }{\check{\lambda}} [\textrm{GV}]_{\beta,0}\,  \left( t^{\beta}+   m\right)  -  \sum_{n\in\mathbb{Z}} n\,\Omega_{\beta,n} \\ 
&=-\frac{1 }{\check{\lambda}} [\textrm{GV}]_{\beta,0}\,  \left( t^{\beta}+   m\right)   \,.
\end{align}
the summand
$ \sum_{n\in\mathbb{Z}} n\,\Omega_{\beta,n}=0$ since $\Omega_{\beta,n}=\Omega_{\beta,-n}\, \forall n\in \mathbb{Z}$. These cancellations yield the desired form:
\begin{align}
 Z_{top,sg}(\lambda,t) =   \prod_{\beta>0} \exp\left(-\frac{[\textrm{GV}]_{\beta,0}}{2\pi i} \operatorname{Li}_2(Q'^{\beta} q'^m)\right) \cdot (1-Q'^{\beta} q'^m)^{-\frac{ [\textrm{GV}]_{\beta,0}}{\check{\lambda}}\left( t^{\beta}+m\right)}\,.
\end{align}
\end{proof}
We pause to comment on this result, it seems rather remarkable that, due to some non-trivial cancellation only the genus zero GV invariants appear in the non-perturbative corrections to the topological string partition function. We will now proceed to show that these non-perturbative corrections can be cast in a rather simple function which we introduce as a deformation of the genus zero prepotential.

\subsection{Deformation of the prepotential}\label{sec:defprepot}
Recall that the non-classical part of the prepotential of a Calabi-Yau threefold has an expression in terms of GV invariants given by:

\begin{equation}
 \tilde{F}^0(t)=\sum_{\beta>0} [\textrm{GV}]_{\beta,0} \operatorname{Li}_3(Q^{\beta}) = \sum_{\beta>0}\sum_{k=1}^{\infty} \frac{[\textrm{GV}]_{\beta,0}}{k^3} Q^{k\beta} \,, \quad Q^{\beta}= \exp(2\pi i t^{\beta})\,.
 \end{equation}
We introduce the following $\varepsilon-$deformation of the prepotential:
\begin{equation}
\tilde{F}^{0}_{\textrm{def}}(\varepsilon,t):=\frac{1}{2} \sum_{\beta>0} \sum_{k=1}^{\infty} \frac{[\textrm{GV}]_{\beta,0}}{k^2} \frac{Q^{k\beta}}{\sin(k\varepsilon/2)}\,,
\end{equation}
this is chosen such that
\begin{equation}
\lim_{\varepsilon\rightarrow 0} \varepsilon \, \tilde{F}^{0}_{\textrm{def}}(\varepsilon,t) =   \tilde{F}^{0}(t)\,.
\end{equation}

\begin{rem}\label{rem:fdef}
\begin{itemize}
\item For the case of the resolved conifold with only one curve class $\beta$ and a nonvanishing GV invariant only at degree $1$, this deformed prepotential is:
\begin{equation}
\tilde{F}^{0}_{\textrm{def}}(\varepsilon,t):=\frac{1}{2}  \sum_{k=1}^{\infty} \frac{1}{k^2} \frac{Q^{k\beta}}{\sin(k\varepsilon/2)}= -F^{NS,con}(\varepsilon,t)\,,
\end{equation}
where $F^{NS,con}(\varepsilon,t)$ denotes the refined topological string in the Nekrasov-Shatashvili limit in the normalization of \cite{Alim:2022oll}. 
\item Note that here the deformation does not require a refined topological string origin and can therefore be applied to both non-compact as well as compact Calabi-Yau geometries.
\item In \cite{Bousseau:Anomaly}, the authors introduce in sec.~1.7  a definition for the topological string free energy in the NS limit for the geometry $\mathcal{O}(-3) \rightarrow \mathbb{P}^2$ known as local $\mathbb{P}^2$ as follows:
\begin{equation} \label{Bousseau:FNS}
F^{NS}:= i \sum_{d,k=1}^{\infty} \frac{1}{k^2} \frac{\Omega_d(y^{\frac{k}{2}})}{y^{\frac{k}{2}}-y^{\frac{-k}{2}}} Q^{kd}\,, 
\end{equation}
where $\Omega_d(y^{\frac{k}{2}})$ denote refined DT invariants studied in \cite{Bousseau:Scattering}. It is furthermore stated in \cite{Bousseau:Anomaly} that the refined invariants $\Omega_d(y^{\frac{1}{2}})$ become the genus zero GV invariants in the limit $y\rightarrow 1$. In that case the enumerative NS generating function coincides with the deformation of the prepotential introduced above with $y=e^{\ii \lambda}$.\footnote{In the formula for  \eqref{Bousseau:FNS} there is a typo in the $\frac{1}{k^2}$ which is given as $\frac{1}{k}$ in the arxiv preprint of that paper, I would like to thank Pierrick Bousseau for pointing this out to me.}
\end{itemize}
\end{rem}

 We can now express the non-perturbative corrections to the topological string as follows:
\begin{prop}\label{prop:correc}
The non-perturbative corrections to the topological string 
\begin{align} \label{Ztopsg}
 Z_{top,sg}(\lambda,t) :=   \prod_{\beta>0}  \prod_{m=0}^{\infty}  \exp\left(-\frac{[\textrm{GV}]_{\beta,0}}{2\pi i} \operatorname{Li}_2(Q'^{\beta} q'^m)\right) \cdot (1-Q'^{\beta} q'^m)^{-\frac{ [\textrm{GV}]_{\beta,0}}{\check{\lambda}}\left( t^{\beta}+m\right)}\,,
\end{align}
can be expressed as 
\begin{equation}
 Z_{top,sg}(\lambda,t)= \exp(F_{top,sg}(\lambda,t) )\,,
\end{equation}
where

\begin{equation}
F_{top,sg}(\lambda,t)=  -\frac{1}{2\pi} \partial_{\lambda} \left( \lambda \, \tilde{F}^{0}_{\textrm{def}} \left( \frac{2\pi}{\check{\lambda}}, \frac{t-1/2}{\check{\lambda}} \right)\right)\,,
\end{equation}

where 
\begin{equation}
\tilde{F}^{0}_{\textrm{def}}(\varepsilon,t):=\frac{1}{2} \sum_{k,\beta>0} \frac{[\textrm{GV}]_{\beta,0}}{k^2} \frac{Q^{k\beta}}{\sin(k\varepsilon/2)}\,.
\end{equation}

\end{prop}

\begin{proof}
From $
 Z_{top,sg}(\lambda,t)= \exp(F_{top,sg}(\lambda,t) )
$ and \eqref{Ztopsg} we obtain
 
\begin{align}
 F_{top,sg}(\lambda,t)=  -\sum_{\beta>0}  \sum_{m=0}^{\infty} \frac{[\textrm{GV}]_{\beta,0}}{2\pi i}  \left( \operatorname{Li}_2(Q'^{\beta} q'^m) +  \frac{ 2\pi \ii}{\check{\lambda}}\left( t^{\beta}+m\right) \,\log (1-Q'^{\beta} q'^m)\right) \,,
\end{align}
which can be written as
\begin{align}
 F_{top,sg}(\lambda,t)=  - \sum_{\beta>0}  \sum_{m=0}^{\infty} \frac{[\textrm{GV}]_{\beta,0}}{2\pi i} \partial_{\check{\lambda}} \left( \check{\lambda} \, \operatorname{Li}_2(e^{2\pi \ii (t^{\beta}+m)/\check{\lambda}})\right) \,,
\end{align}
writing out the series expansion for the dilogarithm, we obtain
\begin{align}
 F_{top,sg}(\lambda,t)=  - \sum_{\beta>0}  \sum_{m=0}^{\infty} \sum_{k=1}^{\infty} \frac{[\textrm{GV}]_{\beta,0}}{2\pi i} \partial_{\check{\lambda}} \left( \check{\lambda} \, \frac{e^{2\pi \ii (t^{\beta}+m)k/\check{\lambda}}}{k^2}\right) \,,
\end{align}
summing over $m$ using the geometric series gives
\begin{align}
 F_{top,sg}(\lambda,t)&=  -  \partial_{\check{\lambda}}  \left( \check{\lambda} \sum_{\beta>0}   \sum_{k=1}^{\infty} \frac{[\textrm{GV}]_{\beta,0}}{2\pi i}  \, \frac{e^{2\pi \ii t^{\beta}k/\check{\lambda}}}{k^2 (1-e^{2\pi \ii k/\check{\lambda}})} \right) \,\\
 &=    \partial_{\check{\lambda}}  \left( \check{\lambda} \sum_{\beta>0}   \sum_{k=1}^{\infty} \frac{[\textrm{GV}]_{\beta,0}}{2\pi i}  \, \frac{e^{2\pi \ii t^{\beta}k/\check{\lambda}}e^{-\pi \ii k/\check{\lambda}}}{k^2 (e^{\pi \ii k/\check{\lambda}}-e^{-\pi \ii k/\check{\lambda}})} \right) \,\\
  &=   -\frac{1}{2\pi} \partial_{\check{\lambda}}  \left( \check{\lambda} \frac{1}{2} \sum_{\beta>0}   \sum_{k=1}^{\infty} \frac{[\textrm{GV}]_{\beta,0}}{2\pi i}  \, \frac{e^{2\pi \ii (t^{\beta}-1/2)k/\check{\lambda}}}{k^2 \, \sin( \pi  k/\check{\lambda})} \right) \,\\&= -\frac{1}{2\pi} \partial_{\lambda} \left( \lambda \, \tilde{F}^{0}_{\textrm{def}} \left( \frac{2\pi}{\check{\lambda}}, \frac{t-1/2}{\check{\lambda}} \right)\right)\,.
\end{align}

\end{proof}

\subsection{Non-perturbative topological string partition function}

We can now include the contributions from constant maps to the topological string free energies and use the analysis of sec.~\ref{sec:buildingblocks} to obtain the following:
\begin{prop}
The non-perturbative partition function of topological string theory on any CY threefold, up to the contributions from constant maps at genus $0,1$, can be expressed as:

\begin{equation}
Z_{\textrm{top},np}(\lambda,t) =  Z_{c,np}(\lambda)  \prod_{n\in\mathbb{Z}} \prod_{\beta>0} \left( Z^{\textrm{con}}_{np} (\lambda, t^{\beta}+ n \, \check{\lambda})\right)^{\Omega_{\beta,n}}\,,
\end{equation}
where

$$Z^{con}_{np}(\lambda,t)= \prod_{m=0}^{\infty} (1-Q\, q^{m+1})^{m+1}  \cdot \exp\left(-\frac{1}{2\pi i} \operatorname{Li}_2(Q' q'^m)\right) \cdot (1-Q' q'^m)^{-\frac{2\pi}{\lambda}\left( t+m\right)}\,, $$

$$
Z_{c,np}(\lambda)= \left( \prod_{m=1}^{\infty} (1- q^{m})^{m}  \cdot \exp\left(-\frac{1}{2\pi i} \operatorname{Li}_2(q'^m)\right) \cdot (1-q'^m)^{-\frac{ m}{\check{\lambda}}}\right)^{-\chi(X)/2}
$$

\end{prop}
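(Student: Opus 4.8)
The plan is to assemble the statement directly from the three building blocks already established, by splitting the full free energy into its non-constant and constant map parts and then exponentiating. Recall from section \ref{sec:freeenergies} that the full Gromov-Witten potential decomposes as $F_{GW} = F_c + \tilde{F}_{GW}$, where $F_c$ collects the constant map contributions and $\tilde{F}_{GW}$ the non-constant ones; the genus $0$ and $1$ pieces of $F_c$ are $t$-dependent and encode classical data, and these are precisely the terms the statement sets aside. I would define the non-perturbative free energy by Borel resumming each piece along the positive real ray, so that $F_{top,np} = F_{c,np} + \tilde{F}_{top,np}$, whence exponentiation turns this sum into the product $Z_{top,np} = Z_{c,np}(\lambda)\cdot \tilde{Z}_{top,np}(\lambda,t)$.

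For the non-constant factor I would invoke Theorem \ref{thm:conifolds}, which expresses $\tilde{F}_{GW}$ as the weighted sum $\sum_{n,\beta}\Omega_{\beta,n}\tilde{F}^{con}(\lambda, t^{\beta,n})$ over conifold free energies. Replacing each summand $\tilde{F}^{con}$ by its Borel sum $F^{con}_{np}$ defines $\tilde{F}_{top,np}$, and exponentiation gives
\begin{equation*}
\tilde{Z}_{top,np}(\lambda,t) = \prod_{n\in\mathbb{Z}}\prod_{\beta>0}\left(Z^{con}_{np}(\lambda, t^{\beta}+n\check{\lambda})\right)^{\Omega_{\beta,n}},
\end{equation*}
which is exactly the intermediate product form already appearing in the proof of the preceding Proposition, with the explicit triple-product expansion of $Z^{con}_{np}$ supplied by Proposition \ref{prop:prod}.

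For the constant map factor the one point requiring a short argument is the dependence on $\chi(X)$. Comparing the general formal series $\Phi_c(\check{\lambda})$ with the conifold series $\Phi_c^{con}(\check{\lambda})$ of section \ref{sec:const}, and using that the resolved conifold has $\chi=2$, one reads off that $\Phi_c = \tfrac{\chi(X)}{2}\Phi_c^{con}$. Since Borel summation is linear, the Borel sum and hence $Z_{c,np}$ is the conifold building block of Proposition \ref{prop:prodconst} raised to the power $-\chi(X)/2$, which is precisely the displayed expression for $Z^{con}_{c,np}(\lambda)$.

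The main thing to be careful about is not a single hard step but the bookkeeping: one must keep the $t$-dependent genus $0,1$ constant map terms (encoding the triple intersections and the second Chern class) separate, as the statement does, and must track the sign and the $\chi(X)/2$ exponent correctly through the inversion $Z^{con}_{c,np} = Z^{con}_{\rho_0}(\lambda,0)^{-1}\check{\lambda}^{-1/12}e^C$. Beyond this the genuine conceptual caveat, already flagged in the text, is that the infinite double product over $n$ and $\beta$ is only guaranteed to assemble into an honest analytic function once the growth of the sheaf invariants $\Omega_{\beta,n}$ is controlled; at the level of the present identity this remains a formal product statement built from the per-conifold blocks, so I would state it as such rather than attempt an analytic convergence claim here.
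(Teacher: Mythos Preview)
Your proposal is correct and follows essentially the same approach as the paper, which simply cites exponentiation of the free energies together with Proposition~\ref{prop:prod} and the analysis of Section~\ref{sec:const}. You spell out more of the bookkeeping (the splitting $F_{GW}=F_c+\tilde F_{GW}$, the $\chi(X)/2$ scaling from the comparison $\Phi_c=\tfrac{\chi(X)}{2}\Phi_c^{con}$, and the convergence caveat), but the ingredients and their assembly are identical.
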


\begin{proof} 
This follows from the definition of the free energies by exponentiation as well as using the product form \eqref{prop:prod} and the analysis of sec.~\ref{sec:const}.
 
\end{proof}


\section{Resurgence}

We now proceed to study the analytic structure behind the asymptotic expansion of GW theory by using resurgence. We therefore first re-write the asymptotic expansion of the GW invariants in terms of the sheaf invariants. 

\subsection{Asymptotic series in terms of sheaf invariants}

\begin{prop}\label{prop:GWinsheaf}
The asymptotic series giving the non-constant part of the generating function of GW invariants 
\begin{align}\label{eq:FGWasmpt}
F_{GW}(\lambda,t) = \frac{1}{\lambda^2}\tilde{F}^0(t) + \tilde{F}^{1}(t) + \sum_{g=2}^{\infty}  \lambda^{2g-2} \tilde{F}^g(t) \,,
\end{align}

can be written in terms of the sheaf invariants as:

$$ \tilde{F}^0(t) =  \sum_{\beta>0} \sum_{n \in \mathbb{Z}} \Omega_{\beta,n}   \sum_{k\ge 1}   \frac{Q^{k\beta}}{k^3} \,, $$

\begin{align}
\tilde{F}^1(t)= \sum_{\beta>0} \sum_{n\in\mathbb{Z}}  \Omega_{\beta,n}     \frac{1}{2}   \left( n\, 2\,  B_{1}(n)-  B_{2}(n)\right)    \operatorname{Li}_{1}(Q^{\beta})  \,,
\end{align}

and 
\begin{equation}
\tilde{F}^g(t)= \sum_{\beta>0} \sum_{n\in\mathbb{Z}}\frac{(-1)^g \Omega_{\beta,n}}{2\cdot2g!} \left( (2g-1) (B_{2g}(n)+B_{2g}(-n))- 2g n (B_{2g-1}(n)-B_{2g-1}(-n))\right)  \operatorname{Li}_{3-2g}(Q^{\beta}) \,,
\end{equation}

for $g\ge 2$, where $B_m(x)$ denotes Bernoulli polynomials.

\end{prop}

\begin{proof}
To obtain a new expression for the Gromov-Witten potential in terms of the sheaf invariants we start again with the Gopakumar-Vafa form of the topological string free energy:

\begin{equation}
F_{GV}(\lambda,t)= \sum_{\beta>0} \sum_{k\ge 1}  \frac{Q^{k\beta}}{k \left( 2 \sin \left( \frac{k\lambda}{2}\right)\right)^{2}}  \sum_{g\ge 0} [\textrm{GV}]_{\beta,g}\, \left( 2 \sin \left( \frac{k\lambda}{2}\right)\right)^{2g} \,,
\end{equation}

using \eqref{defsheafinv} with $y=\exp(\ii k\lambda)$ this becomes:

\begin{align}
F_{GV}(\lambda,t) &= \sum_{\beta>0} \sum_{k\ge 1}  \frac{Q^{k\beta}}{k \left( 2 \sin \left( \frac{k\lambda}{2}\right)\right)^{2}} \sum_{n\in \mathbb{Z}\,} \Omega_{\beta,n} \exp(\ii  n k\lambda) \,,\\
&= \sum_{\beta>0} \sum_{n\in\mathbb{Z}}  \Omega_{\beta,n}   \sum_{k\ge 1}   \frac{Q^{k\beta}}{k}  \,  \frac{ \exp(\ii  n k\lambda)}{{\left( 2 \sin \left( \frac{k\lambda}{2}\right)\right)^{2}}} \,,\\
&=  \sum_{\beta>0} \sum_{n\in \mathbb{Z}} \frac{ \Omega_{\beta,n}}{2}   \sum_{k\ge 1}   \frac{Q^{k\beta}}{k}  \,  \frac{ \exp(\ii  n k\lambda+ \exp(-\ii  n k\lambda)}{{\left( 2 \sin \left( \frac{k\lambda}{2}\right)\right)^{2}}} \,,
\end{align}

where we have used that $\Omega_{\beta,n}=\Omega_{\beta,-n}$ and introduced a factor $1/2$ to compensate for the doubling in the sum. We want to use this expression to extract the asymptotic expansion as $\lambda\rightarrow 0$ and then study the Borel-Laplace summation of the obtained series. This form of the expression makes it clear that it is invariant under $\lambda\rightarrow -\lambda$ and hence the asymptotic expansion should only contain even powers of $\lambda$. We therefore want to study  the asymptotic expansion of 

$$\frac{ \exp(\ii  n k\lambda+ \exp(-\ii  n k\lambda)}{{\left( 2 \sin \left( \frac{k\lambda}{2}\right)\right)^{2}}}\,.$$

We recall the generating function of Bernoulli polynomials:
\begin{equation}
\frac{w e^{w x}}{ e^w-1} = \sum_{m=0}^{\infty} \frac{B_m(x)}{m!} w^m \,, 
\end{equation}
by taking a logarithmic derivative $w\partial_w$ of both sides we obtain:
\begin{equation}
\frac{w e^{wx}}{ e^w-1}+ \frac{w^2 x e^{w x}}{ e^w-1}-\frac{w^2  e^{w(x+1)}}{ (e^w-1)^2} = \sum_{m=1}^{\infty} \frac{B_m(x)}{(m-1)!} w^m \,, 
\end{equation}
by re-arranging we get:
\begin{align}
\frac{e^{w(x+1)}}{ (e^w-1)^2} &= \frac{1}{w^2}(1+w x)- \sum_{m=2}^{\infty} \frac{B_m(x)}{m (m-2)!} w^{m-2}+  x \sum_{m=1}^{\infty} \frac{B_m(x)}{m!} w^{m-1} \,,\nonumber\\
&= \frac{1}{w^2}(1+w x)- \sum_{m=2}^{\infty} \frac{1}{m!} \left( (m-1) B_m(x)- x m B_{m-1}(x)\right)  w^{m-2}\,, 
\end{align}

we thus obtain for 

\begin{align}\label{eq:symBern}
&\frac{e^{w(x+1)}+e^{w(-x+1)}}{ (e^w-1)^2} = \frac{2}{w^2}+ \left( -\frac{1}{6}+x^2\right)  \nonumber \\ 
&- \sum_{g=2}^{\infty}  \frac{1}{2g!} \left(  (2g-1) (B_{2g}(x)+B_{2g}(-x))- 2g x (B_{2g-1}(x)-B_{2g-1}(-x))\right)  w^{2g-2}\,.
\end{align}

We use this for the expression
$$\frac{ \exp(\ii  n k\lambda+ \exp(-\ii  n k\lambda)}{{\left( 2 \sin \left( \frac{k\lambda}{2}\right)\right)^{2}}} =- \frac{ \exp(\ii  (n+1) k\lambda)+\exp(\ii  (-n+1) k\lambda)}{{\left(  e^{i k\lambda}-1\right)^{2}}} $$

 with $w=\ii k \lambda$ and $x=n$, to obtain:

\begin{align}  \label{eq:asymptBern}
&\frac{ \exp(\ii  n k\lambda+ \exp(-\ii  n k\lambda)}{{\left( 2 \sin \left( \frac{k\lambda}{2}\right)\right)^{2}}}=  \frac{2}{k^2 \lambda^2}+ \left( \frac{1}{6}-n^2\right) \nonumber\\
 & +\sum_{g=2}^{\infty}  \frac{1}{2g!} \left(  (2g-1) (B_{2g}(n)+B_{2g}(-n))-2g n (B_{2g-1}(n)-B_{2g-1}(-n))\right)  (\ii k \lambda)^{2g-2}
 \end{align}
 
we thus get

\begin{align}\label{eq:FGVasmpt}
F_{GV}(\lambda,t) &\sim 
  \sum_{\beta>0} \sum_{n \in \mathbb{Z}} \frac{\Omega_{\beta,n}}{2}   \sum_{k\ge 1}   \frac{Q^{k\beta}}{k}  \times \Bigg( \frac{2}{k^2 \lambda^2}+ \left( \frac{1}{6}-n^2\right) \nonumber\\
 & +\sum_{g=2}^{\infty}  \frac{1}{2g!} \left( (2g-1) (B_{2g}(n)+B_{2g}(-n))- 2g n (B_{2g-1}(n)-B_{2g-1}(-n))\right)  (\ii k \lambda)^{2g-2} \Bigg) \,   \,.
\end{align}

We noted in Sec.~\ref{subsec:sheafinv} that the sum over $n$ of all $\Omega_{\beta,n}$ is given by the limit of $y\rightarrow 1$ of
\begin{equation*}
\sum_{n\in \mathbb{Z}\,} \Omega_{\beta,n} y^n = \sum_{g\ge0} [\textrm{GV}]_{\beta,g}\, (-1)^g (y^{1/2}-y^{-1/2})^{2g}\,,
\end{equation*}
giving
\begin{equation}\label{eq:sumOmega}
\sum_{n\in \mathbb{Z}} \Omega_{\beta,n}  = [\textrm{GV}]_{\beta,0}\,,
\end{equation}
similarly by taking two logarithmic derivatives with respect to $y$ and then taking the limit we obtain:
\begin{equation}\label{eq:sumOmega}
\sum_{n\in \mathbb{Z}} n^2 \Omega_{\beta,n}  = 2 [\textrm{GV}]_{\beta,1}\,.
\end{equation}

Both of these equalities can be used to express 
\begin{align}\label{eq:FGVasmpt}
F_{GV}(\lambda,t) &\sim \frac{1}{\lambda^2}\tilde{F}^0(t) + \tilde{F}^{1}(t) + \sum_{g=2}^{\infty}  \lambda^{2g-2} \tilde{F}^g(t) \,,
\end{align}
where

$$ \tilde{F}^0(t) =  \sum_{\beta>0} \sum_{n \in \mathbb{Z}} \Omega_{\beta,n}   \sum_{k\ge 1}   \frac{Q^{k\beta}}{k^3} =  \sum_{\beta>0} [\textrm{GV}]_{\beta,0} \operatorname{Li}_{3}(Q^{\beta})  $$

and 
the constant term in $\lambda$ in the expansion \eqref{eq:FGVasmpt} is:
\begin{align}
\sum_{\beta>0} \sum_{n\in\mathbb{Z}}  \Omega_{\beta,n}     \frac{1}{2}   \left( n\, 2\,  B_{1}(n)-  B_{2}(n)\right)    \operatorname{Li}_{1}(Q^{\beta})  &= \sum_{\beta>0} \sum_{n\in\mathbb{Z}}  \Omega_{\beta,n}    \frac{1}{12} (-1+ 6 n^2)    \operatorname{Li}_{1}(Q^{\beta})\nonumber  \\
&= \sum_{\beta>0}   \left( [\textrm{GV}]_{\beta,1} -\frac{1}{12}  [\textrm{GV}]_{\beta,0} \right)    \operatorname{Li}_{1}(Q^{\beta}) \nonumber\\
&= \tilde{F}^1(t)\,,
\end{align}
and 

\begin{equation}
\tilde{F}^g(t)= \sum_{\beta>0} \sum_{n\in\mathbb{Z}}\frac{(-1)^{(g-1)} \Omega_{\beta,n}}{2\cdot2g!} \left( (2g-1) (B_{2g}(n)+B_{2g}(-n))- 2g n (B_{2g-1}(n)-B_{2g-1}(-n))\right)  \operatorname{Li}_{3-2g}(Q^{\beta}) .
\end{equation}
\end{proof}

\subsection{The Borel transform}\label{boreltransproof}

\begin{thm} \label{thm:Borel}
Let $\Phi(\check{\lambda},t):= \sum_{g=2}^{\infty} \lambda^{2g-2} \tilde{F}^g(t)=  \sum_{g=2}^{\infty} \check{\lambda}^{2g-2} \, (2\pi)^{2g-2} \tilde{F}^g(t)\,, \check{\lambda}=\lambda/2\pi$ denote the asymptotic series of the generating function of  higher genus $g \ge 2$ GW invariants. Let furthermore the formal variables $t^{\beta} \in \mathbb{C}^{\times}\setminus \mathbb{Z}$ be with $\textrm{Re}(t^\beta) <1/2$ for all $\beta \in H_2(X,\mathbb{Z})$, then the Borel transform $G(\xi,t)$ of $\Phi(\check{\lambda},t)$  can be analytically continued to a meromorphic function given by:
\begin{equation} \label{eq:BorelTransform}
G(\xi,t) 
=  \frac{1}{\xi} \sum_{\beta>0} \sum_{k,n\in \mathbb{Z}} \frac{\Omega_{\beta, n}}{2}
\left(
  \frac{2(t^{\beta}+k)^2}{\xi^2}
  + \left(n^2 - \frac{1}{6}\right)
  - \frac{
      e^{\frac{\xi(n+1)}{t^{\beta}+k}}
      + e^{\frac{\xi(-n+1)}{t^{\beta}+k}}
    }{
       \left( e^{\frac{\xi}{t^{\beta}+k}} - 1 \right)^2
    } 
\right)\,,
\end{equation}
with poles at $\xi= 2\pi \ii m  (t^\beta+k)\,,$ for $m\in \mathbb{Z}\setminus \{0\}$ and $k\in \mathbb{Z}$.
\end{thm}

\begin{proof}

The Borel transform is defined as the formal power series $G(\xi,t):=\mathcal{B}(\Phi(-,t))(\xi)$, where
\begin{equation}
    \mathcal{B}\colon\check{\lambda}\mathbb{C}[[\check{\lambda}]]\to \mathbb{C}[[\xi]], \;\;\;\; \mathcal{B}(\check{\lambda}^{n+1})=\frac{\xi^n}{n!}.
\end{equation}
We wish to study the Borel transform of 
\begin{align}
\Phi(\check{\lambda},t)& = \sum_{g=2}^{\infty}\sum_{\beta>0} \sum_{n\in\mathbb{Z}}\frac{(2\pi \ii)^{2g-2} \Omega_{\beta,n}}{2\cdot2g!} \left( (2g-1) (B_{2g}(n)+B_{2g}(-n))- 2g n (B_{2g-1}(n)-B_{2g-1}(-n))\right)\nonumber \\
&\times  \operatorname{Li}_{3-2g}(Q^{\beta}) \check{\lambda}^{2g-2}\,,
\end{align}
which is given by
\begin{align}
G(\xi,t)& = \sum_{g=2}^{\infty}\sum_{\beta>0} \sum_{n\in\mathbb{Z}}\frac{(2\pi \ii)^{2g-2} \Omega_{\beta,n}}{2\cdot2g!} \left( (2g-1) (B_{2g}(n)+B_{2g}(-n))- 2g n (B_{2g-1}(n)-B_{2g-1}(-n))\right)\nonumber \\
&\times  \operatorname{Li}_{3-2g}(Q^{\beta}) \frac{\xi^{2g-3}}{2g-3!}\,,
\end{align}

we now use the identity
$$
\operatorname{Li}_s \left(e^{\mu}\right)
=
\Gamma(1 - s)
\sum_{k=-\infty}^{\infty}
\left(2k\pi i - \mu\right)^{\,s - 1}
$$

$$
\text{valid for } \textrm{Re}(s) < 0 \quad \text{and all } \mu \text{ such that } e^{\mu} \neq 1,
$$
to write

\begin{align}
G(\xi,t)& = \frac{1}{\xi} \sum_{g=2}^{\infty}\sum_{\beta>0} \sum_{k\in \mathbb{Z}}\sum_{n\in\mathbb{Z}} \frac{\Omega_{\beta,n}}{2\cdot (2g)!} \left( (2g-1) (B_{2g}(n)+B_{2g}(-n))- 2g n (B_{2g-1}(n)-B_{2g-1}(-n))\right)\nonumber \\
& \times \left(\frac{\xi}{t^{\beta}+k}\right)^{2g-2}\,,
\end{align}
we can now use \eqref{eq:symBern} to write this as:
\begin{equation}
G(\xi,t)= \frac{1}{\xi} \sum_{\beta>0} \sum_{k,n\in \mathbb{Z}} \frac{\Omega_{\beta, n}}{2}
\left(
  \frac{2(t^{\beta}+k)^2}{\xi^2}
  + \left(n^2 - \frac{1}{6}\right)
  - \frac{
      e^{\frac{\xi(n+1)}{t^{\beta}+k}}
      + e^{\frac{\xi(-n+1)}{t^{\beta}+k}}
    }{
       \left( e^{\frac{\xi}{t^{\beta}+k}} - 1 \right)^2
    } 
\right)\,.
\end{equation}
\end{proof}

\subsection{Borel sums and Stokes factors}

Let  $\Gamma= H_0(X,\mathbb{Z})\oplus H_2(X,\mathbb{Z})$ and let the constant $1$ be the generator of $H_0(X)$ and $\beta\in H_{2}(X,\mathbb{Z})$, we denote the central charge of $\gamma\in\Gamma$ by $Z(\gamma)$ , let $\gamma_{\beta,k}=(k,\beta)$ and its  the central charge $Z(\gamma_{\beta,k})=2\pi\ii (t^{\beta}+k)$ and assume for the moment that there are as many formal variables $t^{\beta}$ as there are curve classes. Let   $t^{\beta}\in \mathbb{C}-\mathbb{Z}$ and  $k\in \mathbb{Z}$ let $l_{\gamma}:=\mathbb{R}_{<0}\cdot Z(\gamma)$ and $l_{\infty}:=\I\mathbb{R}_{<0}$. 
\begin{dfn}
Given any ray $\rho$ from $0$ to $\infty$ different from $\{\pm l_{\gamma_{\beta,k}}\} \cup \{\pm l_{\infty}\}$, and $\lambda$ in the half-plane $\mathbb{H}_{\rho}$ centered at $\rho$. We define the Borel sum of $\widetilde{F}(\lambda,t)$ along $\rho$ as
\begin{equation}\label{BorelS}
    F_{\rho}(\lambda,t):=\ \frac{1}{\lambda^2} \widetilde{F}^0(t) + \widetilde{F}^1(t) +\int_{\rho}\mathrm d\xi\, e^{-\xi/\check{\lambda}} G(\xi,t)\,.
\end{equation}
\end{dfn}

\begin{thm} \label{thm:Stokes}

Let $\rho_{\beta,k}$ be a ray in the sector determined by the rays of singularities $l_{\gamma_{(\beta,k)}}$ and $l_{\gamma_{(\beta,k-1)}}$ and $\rho_{\beta,k} \ne l_{\gamma_{(\beta',m)}}$ for all $\beta' \ne \beta$ and all $m \in \mathbb{Z}$ . Then if $\mathrm{Im}(t^{\beta})>0$ and furthermore $t^{\beta'} \ne t^{\beta}$ for all $\beta'\ne \beta$, on the overlap of their domains of definition in the $\lambda$ variable we have
\begin{align}\label{eq:Stokesjumpsgeneral}
    \phi_{\pm l_{\beta,k}}(\lambda,t)&:=F_{\pm \rho_{\beta,k+1}}(\lambda,t)
    -F_{\pm \rho_{\beta,k}}(\lambda,t) =\frac{[GV]_{\beta,0}}{2\pi \I}\partial_{\check\lambda}\Big(\check{\lambda}\,\mathrm{Li}_2\big(e^{\pm 2\pi \I(t^{\beta}+k)/\check \lambda}\big)\Big)\;, \, \\
    &= \frac{[GV]_{\beta,0}}{2\pi \I} \left( \mathrm{Li}_2\big(e^{\pm 2\pi \I(t^{\beta}+k)/\check \lambda}\big) \pm 2\pi \I \frac{(t+k)}{\check \lambda} \log \left(1- e^{\pm 2\pi \I(t^{\beta}+k)/\check \lambda} \right)\right)\;.
\end{align}
Defining
\begin{equation}
 F_{Stokes}(\lambda,t): =  \sum_{\beta>0} \sum_{k=0}^{\infty} \phi_{ l_{\beta,k}}(\lambda,t)
 \end{equation}
 we have:
 
 \begin{equation}
 F_{Stokes}(\lambda,t) = - F_{top,sg}(\lambda,t)=  \frac{1}{2\pi} \partial_{\lambda} \left( \lambda \, \tilde{F}^{0}_{\textrm{def}} \left( \frac{2\pi}{\check{\lambda}}, \frac{t-1/2}{\check{\lambda}} \right)\right)\,,
 \end{equation}
 in terms of the deformation of the prepotential studied in sec.~\ref{sec:defprepot}.
 
\end{thm}

\begin{proof}
Let $y_{m}=\xi-2\pi \mathrm i m(t^{\beta}+k)$ be a local coordinate near the pole of the Borel transform $G(\xi,t)$ given by $\xi=2\pi \mathrm i m (t^{\beta}+k)$, then by Taylor expanding the integrand of the Borel sum near $y_m=0$ we obtain
\begin{align}
   e^{-\xi/\check{\lambda}} G(\xi,t)&= \sum_{n\in \mathbb{Z}} \Omega_{\beta,n} e^{-2\pi \mathrm i m(t^{\beta}+k)/\check{\lambda}} \left( \frac{1}{y_m^2} \frac{\mathrm i (t^{\beta}+k)}{2\pi m}+\frac{1}{y_m}\left( - \frac{1}{4\pi^2 m^2 }- \frac{\mathrm  i}{2\pi m\check{\lambda}} (t^{\beta}+k)\right) + \mathcal{O}(1)\right)  \nonumber \\ &+   \sum_{\beta'\ne \beta}  \Omega_{\beta',n} \times \mathcal{O}(1) \,.
\end{align}

The expressions for Stokes jumps 
$$  \phi_{\pm l_{\beta,k}}(\lambda,t)=F_{\pm \rho_{\beta,k+1}}(\lambda,t)
    -F_{\pm \rho_{\beta,k}}(\lambda,t) $$
    
   define Hankel contours which can be evaluated using the residue theorem to obtain:
    
     \begin{equation}
  \phi_{\pm l_{\beta,k}}(\lambda,t) =  \sum_{n\in \mathbb{Z}} \frac{\Omega_{\beta,n}}{2\pi \I} \left( \mathrm{Li}_2\big(e^{\pm 2\pi \I(t^{\beta}+k)/\check \lambda}\big) \pm 2\pi \I \frac{(t+k)}{\check \lambda} \log \left(1- e^{\pm 2\pi \I(t^{\beta}+k)/\check \lambda} \right)\right)\;,
\end{equation}    

the sum $\sum_{n\in \mathbb{Z}} \Omega_{\beta,n} =[GV]_{\beta,0} $ yields again 

\begin{equation}
  \phi_{\pm l_{\beta,k}}(\lambda,t) = \frac{[GV]_{\beta,0}}{2\pi \I} \left( \mathrm{Li}_2\big(e^{\pm 2\pi \I(t^{\beta}+k)/\check \lambda}\big) \pm 2\pi \I \frac{(t+k)}{\check \lambda} \log \left(1- e^{\pm 2\pi \I(t^{\beta}+k)/\check \lambda} \right)\right)\;.
\end{equation}

We now have for 

\begin{align}
 F_{Stokes}(\lambda,t) &=  \sum_{\beta>0} \sum_{k=0}^{\infty} \phi_{ l_{\beta,k}}(\lambda,t) = \sum_{\beta>0}  \sum_{k=0}^{\infty} \frac{[\textrm{GV}]_{\beta,0}}{2\pi i} \partial_{\check{\lambda}} \left( \check{\lambda} \, \operatorname{Li}_2(e^{2\pi \ii (t^{\beta}+k)/\check{\lambda}})\right) \,, \nonumber\\
  &=\frac{1}{2\pi} \partial_{\lambda} \left( \lambda \, \tilde{F}^{0}_{\textrm{def}} \left( \frac{2\pi}{\check{\lambda}}, \frac{t-1/2}{\check{\lambda}} \right)\right) \,,
\end{align}
using the proof of proposition \ref{prop:correc}.

\end{proof}

\begin{rem}
\begin{itemize}
\item Equation \eqref{eq:Stokesjumpsgeneral} is the generalisation of the expressions obtained in \cite{Alim:2021mhp}. Via the correspondence of the resurgence and the Riemann-Hilbert problem put forward by Bridgeland in \cite{Bridgeland1,BridgelandCon},  discussed in \cite[Section 4]{Alim:2021mhp}the prefactors $[GV]_{\beta,0}$ correspond to the DT invariants of an underlying DT RH problem. The expected RH problem associated to the part of the GW generating function which corresponds to the genus $0$ part in the GV expansion has been addressed in \cite[Section 6]{Bridgeland1}.

\item The form of the Stokes jumps \eqref{eq:Stokesjumpsgeneral} put forward in \cite{Alim:2021mhp} was already found in a slightly different guise in earlier work of Pasquetti and Schiappa\cite{Pasquetti:2009jg}, it was furthermore identified in \cite[Prop. 4.7]{Alim:2022oll} as the quantum dilogarithm function with equal arguments and more recently this form was also expected to hold more generally in topological string theory in \cite{Iwaki:2023cek}.

\end{itemize}
\end{rem}

\subsection{Singularities in the Borel plane}

The singularities of the Borel transform lie along the rays  $l_{\gamma}\,,\gamma \in \Gamma$, whose slope is obtained from the central charges $Z(\gamma)$. In the general discussion above we have assumed that there are as many formal variables $t^{\beta}$ as curve classes which allowed the straightforward generalisation of \cite{Alim:2021mhp}. From mirror symmetry however we know that all $t^{\beta}$ should live in an $n=h^{1,1}(X)$ complex manifold, which is the moduli space of complexified K\"ahler forms of $X$.
Let therefore $C_a \in H_{2}(X,\mathbb{Z}), a=1,\dots,n$ be a set of curve classes spanning $H_2(X,\mathbb{Z})$, then in general all curve classes $\beta$ can be written as $\beta=\sum_{a=1}^{n} d_a C_a$, where the $d$'s are the degrees of the curve. The formal variables $t^{\beta}$ become $t^{\beta}=\sum_{a=1}^{n} d_a t^a$ and the $t^a$ correspond to local coordinates on the moduli space of complexified K\"ahler forms of the underlying CY $X$. We can now see that this leads to regions in the Borel plane where there are accumulation rays of singularities. 

We first note that the singularities of the Borel transform \eqref{eq:BorelTransform} are at 

\begin{equation}\label{Borelsing}
\xi= 2\pi \ii m (\sum_{a=1}^{n} d_a t^a + k)\,, k\in\mathbb{Z}\,, \quad d_a \in \mathbb{N}\,, \quad a=1,\dots,n\,,
\end{equation}
as long as $[GV]_{\{d_a\}  ,0}\ne0$.

For the purpose of illustrating the loci of the singularities we focus on a single curve class $C$ and its higher degrees given by $d\in\mathbb{N}$, the singularities of the Borel transform coming from that curve are on the rays:
\begin{equation}
\pm l_{d,k}:=\pm\left( -\mathbb{R}_{<0} 2\pi \ii (dt + k) \right)\,, \quad d\in\mathbb{N},\quad  k\in \mathbb{Z}\,,
\end{equation}
as long as $[GV]_{d,0}\ne 0$.

The slope of the line $ l_{d,k}$ is given by:
\begin{equation}
- \frac{\textrm{Re}(d t)+ k}{\textrm{Im}dt}=  - \frac{\textrm{Re}( t)}{\textrm{Im} t} - \frac{k/d}{\textrm{Im}t}\,,
\end{equation} 
we observe thus, that in addition to the accumulation rays of singularities $\pm l_{\infty}$ due to the growth in $k$, there are further accumulation rays due to the increase in the degree around $\pm l_{1,0}$. Furthermore we can expect generically rays of singularities along all $\pm l_{1,0} + \mathbb{Q}\cdot \frac{1}{\textrm{Im}t}$. An illustration of the some of the rays of singularities is give in Fig.~\ref{fig:Borelsing}.

A distinguished set of singularities in the Borel plane thus corresponds to the loci where only one $d_a=1$ and $m=\pm 1$ in \eqref{Borelsing}. The corresponding Stokes factor would then have the integer prefactor given by $[GV]_{d_a=1,0}$. In the case of the quintic there is only one curve class and the degree $d=1$ curve GV count is given by $2875$. This number was recently seen numerically using the interplay of large order genus expansion and resurgence in \cite{Gu:2023mgf}.

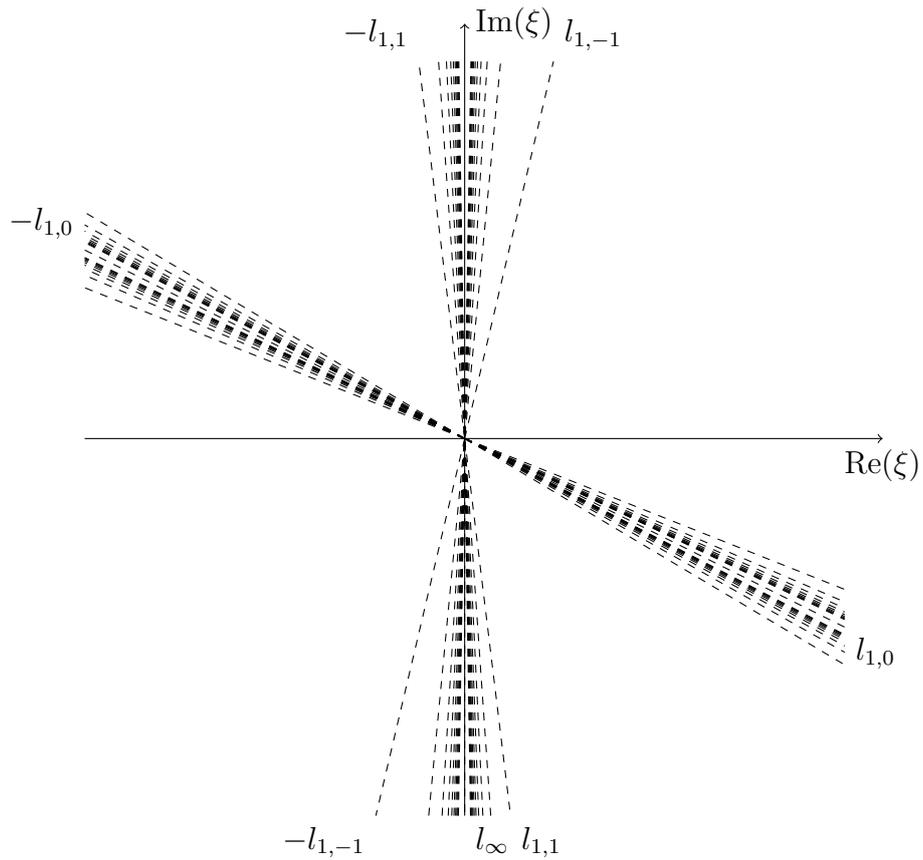
\begin{figure}
\begin{center}
\begin{tikzpicture}
 \draw[->] (-5,0) -- (5.5,0) coordinate[label = {below:$\textrm{Re} (\xi)$}] (xmax);
  \draw[->] (0,-5) -- (0,5.5) coordinate[label = {right:$\textrm{Im} (\xi)$}] (ymax) ;
 \draw[dashed] (0,0)--(5,-2.5) node[anchor=north west] {$l_{1,0}$};
\draw[dashed] (0,0)--(-5,2.5) node[anchor=south east] {$-l_{1,0}$};
  \draw[dashed] (0,0)--({5/(-2-6.28)},5) node[anchor=south east] {$-l_{1,1}$};
  \draw[dashed] (0,0)--({-5/(-2-6.28)},-5) node[anchor=north west] {$l_{1,1}$}; 
 \draw[dashed] (0,0)--({5/(-2-6.28*(-1))},5) node[anchor=south west] {$l_{1,-1}$};
  \draw[dashed] (0,0)--({-5/(-2-6.28*(-1)},-5) node[anchor=north east] {$-l_{1,-1}$};
 \draw[dashed] (0,0)--(0,-5) node[anchor=north west] {$l_{\infty}$}; 
 \foreach \k in {-10,...,-2} {
  \draw[dashed] ({-5/(-2-6.28*\k)},-5)--({5/(-2-6.28*\k)},5);
  };
 \foreach \k in {2,...,10} {
  \draw[dashed] ({-5/(-2-6.28*\k)},-5)--({5/(-2-6.28*\k)},5);
  };
  \foreach \d in {2,...,10} {
  \draw[dashed] (0,0)--(5,{-2.5+1/(\d)}) ;
   \draw[dashed] (0,0)--(5,{-2.5-1/(\d)}) ;
    \draw[dashed] (0,0)--(-5,{2.5+1/(\d)}) ;
   \draw[dashed] (0,0)--(-5,{2.5-1/(\d)}) ;
   };

\end{tikzpicture}
\end{center}

\caption{Illustration of some of the singularity rays in the Borel plane $l_{d,k}=\mathbb{R}_{<0}\cdot 2\pi \I (d t+k)$ in the Borel plane, plotted for $t= \frac{1}{\pi}\left(\frac{1}{2}+ \I\right)$ and $(d,k)=(1,-10),\dots,(1,10)$ as well as $(d,\pm1)= (2,\pm1)\dots (10,\pm1)$}
\label{fig:Borelsing}
\end{figure}


\section{Discussion and Conclusions}\label{sec:conclusions}
In this work we have put forward an expression for the non-perturbative topological string partition function on any Calabi-Yau threefold, compact or non-compact, in the holomorphic limit which provides the asymptotic expansion of topological strings encoding the enumerative GW as well as GV invariants of the underlying CY threefold. A crucial step in obtaining this expression was identifying the partition function associated to any CY manifold as a product over basic building blocks which correspond to the constant map contributions on the one hand while on the other hand the non-constant maps are captured by resolved conifold partition functions with shifted arguments, raised to the power of sheaf invariants. While the general convergence properties of the full non-perturbative expression requires more knowledge of the sheaf invariants which are obtained from all higher genus GV invariants, it was found that the non-perturbative corrections are governed by only the genus 0 GV invariants, which via mirror symmetry are encoded in rational functions on the complex structure moduli space of the mirror manifold and have thus good convergence properties. It was furthermore shown that the non-perturbative corrections could be entirely captured in terms of a single function which was identified as an $\varepsilon$ deformation of the genus 0 prepotential. This function corresponds in the case of the resolved conifold to the refined topological string in the NS limit which was shown to capture the non-perturbative corrections in that case \cite{Hatsuda:2015owa,Alim:2021ukq}.

At the end it is perhaps possible to speculate along some future directions and questions:
\begin{itemize}

\item \emph{Global properties}
The non-perturbative expression for the topological string partition function in this work is obtained from the holomorphic limit which contains the enumerative geometry information of the underlying CY threefold. On the A-model side, this is the large radius region, on the mirror B-side this is the vicinity of a point of maximal unipotent monodromy. It is known, for instance from the computations using the topological vertex \cite{Aganagic:2003db}, that the obtained expression does not give a good global function on the underlying moduli space and it is in general difficult, if not impossible to analytically continue. The route to obtain global expressions for the topological string partition functions is using the holomorphic anomaly equations \cite{Bershadsky:1993cx} together with the recursive polynomial construction of the genus expansion put forward in \cite{Yamaguchi:2004bt,Alim:2007qj} as was recently revisited in connection to non-perturbative topological strings in \cite{Gu:2023mgf}. In \cite{Iwaki:2023cek}, the approach of making a transseries ansatz and using the holomorphic anomaly as studied in \cite{Couso-Santamaria:2013kmu,Couso-Santamaria:2014iia} as well as the results of \cite{IwakiPainleve,Gu:2022sqc,Gu:2023mgf} were used to put forward an expectation for the general form of the Stokes automorphisms acting on the topological string partition function. The expected Stokes factors are of the form obtained from the analytic results of \cite{Pasquetti:2009jg,Alim:2021mhp} as well as their generalizations in this work, obtained without use of the holomorphic anomaly. One may expect that the non-holomorphic topological string has an even richer non-perturbative structure. In fact, in \cite{Alim:2015qma} the holomorphic anomaly equations were used together with the polynomial structure of higher genus topological strings to derive an Airy differential equation in the topological string coupling governing a universal piece of the higher genus free energies for all CY families, in addition to the perturbative piece captured by one solution of the Airy differential equation, the second solution suggests corrections with $e^{-1/\lambda^2}$ behaviour, this was explained by the effects of NS-5 branes in \cite{Couso-Santamaria:2015hva}. These non-perturbative effects seem to be completely missing in the results of the current paper which is likely an artefact of starting with the asymptotic series of topological strings in the limit leading to the enumerative geometry content.

\item \emph{Non-perturbative absence of the gap?}
Although the non-perturbative expression of this work may not be the full non-holomorphic one, one may however speculate at this stage, whether the \emph{gap condition} used in \cite{Huang:2006hq} and which refers to a leading singularity in the topological string free energies $\mathcal{F}^g(t)$ near a point in moduli space where finitely many physical states become massless is an artefact of the perturbative expansion. Near the conifold point of the resolved conifold, i.~e.~ $t\rightarrow 0$, it was proven in \cite{Alim:2021ukq} that the analytic continuation in $t$ order by order in $\lambda$ gives a leading singular behavior of the expected form and that moreover the constant term of this analytic continuation contained the information of the constant map contribution. A surprise of the non-perturbatively complete Borel sums put forward in \cite{Alim:2021mhp} is that their limit $t\rightarrow 0$ is well behaved and it moreover encodes the constant map contributions. Furthermore, the combination of the non-constant and constant map contributions leads to a partition function which is canonically normalized as $t\rightarrow 0$. One may speculate if a similar phenomenon also appears for topological strings in general. Hints that the moduli space to be considered is a non-trivial combination of $\lambda$ and $t$ already appear in the analysis of the natural dual, strongly coupled variables $\frac{t}{\lambda},\frac{1}{\lambda}$. The expressions put forward in this work in terms of the triple sine function have again good $t^{\beta}\rightarrow 0 $ limits. In general however the conifold loci correspond to the vanishing of a linear combination of dual period which may not be captured by the current expressions. Another speculation is whether the topological string partition functions are normalized in a canonical way as well or whether this was an accident of the resolved conifold, the general case would require non-trivial relations between $\chi(X)$ the Euler character of the CY and sums of its GV invariants.

\item \emph{Quantum Picard-Fuchs equations?} The difference equation of the asymptotic expansion of the resolved conifold free energy of \cite{Alim:2020tpw} was interpreted in \cite{Alim:2022oll} as giving a closed expression of the quantum Picard-Fuchs equations which refer to the differential equations annihilating the Voros symbols in the exact WKB analysis of the quantum curves obtained from the mirrors of non-compact CY.  It was recently shown in \cite{Jockers:2024zio} that the quantum Picard-Fuchs equations considered by \cite{Jockers:2019wjh} in the context of quantum K-theory give, for the resolved conifold, exactly the difference equations obtained in \cite{Alim:2020tpw}. It would be interesting to develop such a correspondence more generally, perhaps guided by the expressions of the deformed prepotential put forward in this work.

\item \emph{The Nekrasov-Shatshvili limit?} The deformed prepotential introduced in this work, which is capturing the non-perturbative corrections to the topological string coincides with the Nekrasov-Shatashvili limit of the refined topological string in the case of the resolved conifold. It is not entirely clear whether such a relation can be expected more generally. What is clear however, is that the piece coming from the deformed prepotential in the non-perturbative expression is responsible for the cancellation of poles of the topological string free energy in $\lambda$ and as such is playing the same role as the NS limit in the Topological String/Spectral Theory correspondence of \cite{Grassi:2014zfa}. Further connections between the deformed prepotential and the NS limit can probably understood from a careful study of the definition of the NS generating function in terms of refined DT invariants as in \cite{Bousseau:Anomaly}. A limit of the function introduced there coincides with the deformed prepotential.

\item \emph{Relations between enumerative invariants} the expressions for the GW generating functions in terms of the sheaf invariants given in prop. \ref{prop:GWinsheaf} suggest new relations between GW and sheaf invariants. Perhaps some of these can be understood in the context of the relations between sheaf and GW invariants as studied for instance in \cite{Bousseau:Takahashi} \footnote{I would like to thank the referee for pointing out a possible relation.}

\end{itemize}


\subsection*{Acknowledgements}
I would like to thank Lotte Hollands, Arpan Saha, Ivan Tulli and Joerg Teschner for collaboration on projects which led to this work as well as for discussions and comments. I would like to thank Noah Dizep, Veronica Fantini, Stavros Garoufalidis, Albrecht Klemm, Maxim Kontsevich, Marcos Marino,  Andy Neitzke, Yan Soibelman and Campbell Wheeler for discussions and comments. I would like to furthermore thank the organizers of the Les Diablerets workshop on Quantum Moduli in September 2023 and the organizers of the INI workshop on Number Theory, Machine Learning and Quantum Black Holes in October 2023 for the opportunity to announce the results of this work. I would also like to thank the referees for valuable insights and suggestions. This work is supported through the DFG Emmy Noether grant AL 1407/2-1.

\begin{appendix}\label{appendix}

\section{Multiple sine functions}\label{specialfunctapp}
We introduce several special functions which are used throughout the paper following the notation of \cite{Narukawa}.  Suppose $\omega_1\,,\dots,\omega_r \in \mathbb{C}$ all lie on the same side of some straight line through the origin, then the multiple zeta function is defined by the series
\begin{equation}
\zeta_r(s,z|\underline{\omega}) = \sum_{n_1,\dots,n_r=0}^{\infty} \frac{1}{(n_1\omega_1+\dots+ n_r \omega_r +z)^s}\,,
\end{equation}
for $z\in \mathbb{C}$ and $\textrm{Re}(s) >r$, where the exponential is rendered one-valued. This series is holomorphic in the domain $\{ \textrm{Re} (s)>r\}$ and is analytically continued to $s\in \mathbb{C}$. 

Since it is holomorphic at $s=0$ we can define the multiple gamma function by
\begin{equation}
\Gamma_r(z | \underline{\omega}) =\exp \left( \frac{\partial}{\partial s} \zeta_r(s,z | \underline{\omega})|_{s=0} \right),
\end{equation}
where
$ |\underline{\omega}| = \omega_1 + \dots + \omega_r\,, $
and the multiple sine function is given by
\begin{equation}
\sin_r(z|\underline{\omega}) = \Gamma_r(z|\underline{\omega})^{-1}\, \Gamma_r(|\underline{\omega}| -  z | \underline{\omega})^{(-1)^r}\,.
\end{equation}
This satisfies the functional equations:
\begin{eqnarray}\label{eq:sinefunc}
&&\sin_r(z+\omega_j | \underline{\omega}) =\sin_{r-1} (z | \underline{\omega}^{-}(j))^{-1} \sin_r (z|\underline{\omega}) \, , \\
&&\sin_r(z| \underline{\omega}) \,\sin_r(|\underline{\omega}|-z | \underline{\omega})^{(-1)^r}=1\,,
\end{eqnarray}
where $$ \underline{\omega}^{-}(j)= (\omega_1, \dots,\overset{\vee}{\omega_j},\dots,\omega_r)\,,$$
and $\overset{\vee}{\omega_j}$ means omitting $\omega_j$.

For $z\in \mathbb{C}$, $\underline{\omega}=(\omega_1,\dots,\omega_r)$, $\omega_j \in \mathbb{C}\setminus\{0\}$ we define the multiple Bernoulli polynomials $B_{r,n}(z,\underline{\omega})$ in terms of the generating function
\begin{equation}\label{eq:Bernoullipolgen}
\mathcal{B}_r(z,x|\, \omega_1,\,\dots,\omega_r) = \frac{x^r e^{z x}}{\prod_{j=1}^r (e^{\omega_j x}-1)} = \sum_{n=0}^{\infty} B_{r,n}(z|\underline{\omega}) \frac{x^n}{n!}\,.
\end{equation}

The multiple sine functions are defined using the Barnes multiple Gamma functions \cite{Barnes}. For a variable $z\in \mathbb{C}$ and parameters $\omega_1,\ldots,\omega_r \in \mathbb{C}^{*}$ these are defined by
\begin{equation}
    \sin_r(z\,|\, \omega_1,\dots,\omega_r):= \Gamma_{r}(z\, |\, \omega_1,\dots,\omega_r)^{-1} \cdot \Gamma_{r}\left(\sum_{i=1}^r \omega_i - z\, |\, \omega_1,\dots,\omega_r\right)^{(-1)^r} \,.
\end{equation}
For further definitions see e.g.~\cite{BridgelandCon,Ruijsenaars1} and references therein. 

Another special function is (related to) Fadeev's quantum dilogarithm
\begin{equation}\label{g2def}
    \mathcal{S}_2(z\, | \, \omega_1,\omega_2) := \exp\left(-\frac{\pi \I}{2} \cdot B_{2,2}(z|\,\omega_1,\omega_2)\right) \cdot \sin_2(z\, |\, \omega_1,\omega_2),
\end{equation}
which we have introduced in terms of the double sine function as in \cite{BridgelandCon}. 

The final special function is defined by
\begin{equation}\label{g3def}
    \mathcal{S}_3(z\, | \, \omega_1,\omega_2) := \exp\left(\frac{\pi \I}{6} \cdot B_{3,3}(z+\omega_1\,|\,\omega_1,\omega_1,\omega_2)\right) \cdot \sin_3(z+\omega_1\, |\, \omega_1,\omega_2,\omega_3),
\end{equation}

\end{appendix}


\providecommand{\href}[2]{#2}\begingroup\raggedright\endgroup

\end{document}